\newcommand{\rmnum}[1]{\romannumeral #1}
\newcommand{\Rmnum}[1]{\expandafter\@slowromancap\romannumeral #1@}
\newtheorem{theorem}{\textbf{Theorem}}
\newtheorem{lemma}{\textbf{Lemma}}
\newtheorem{definition}{\textbf{Definition}}
\newtheorem{remark}{\textbf{Remark}}
\newtheorem{proposition}{Proposition}
\def\saveenum{\xdef\@savedenum{\the\c@enumi\relax}}
\def\resetenum{\global\c@enumi\@savedenum}
\title{Random Projection and Recovery for High Dimensional Optimization With Arbitrary Outliers}
\author[]{Hu Ding} 
\author[]{Jiawei Huang}
\author[]{Ruizhe Qin}
\author[]{Fan Yang}
\affil[]{%
	School of Computer Science and Technology\\
	University of Science and Technology of China\\
	He Fei, China \\
}
\begin{document}

\maketitle

\begin{abstract}
Robust optimization problems have attracted considerable attention in recent years.  In this paper, we focus on two fundamental robust optimization problems: {\em SVM with outliers} and {\em $k$-center clustering with outliers}. The key obstacle is that the outliers can be located arbitrarily in the space ({\em e.g.,} by an attacker), and thus they are actually quite challenging combinatorial optimization problems. Their computational complexities can be very high especially in high dimensional spaces. 
 The {\em Johnson-Lindenstrauss (JL) Transform} is a popular random projection  method for dimension reduction. Though the JL transform has been widely studied in the past decades, its effectiveness for dealing with high-dimensional optimizations with outliers has never been investigated before (to the best of our knowledge).
Based on some novel insights from the geometry, we prove that the complexities of these two problems can be significantly reduced through the JL transform. Moreover, we prove that the solution in the dimensionality-reduced space can be efficiently recovered in the original $\mathbb{R}^d$ while the quality is still preserved. To study its performance in practice, we compare different JL transform methods with  several other well known dimension reduction methods in our experiments.
\end{abstract}

\vspace{-0.1in}

\section{Introduction}
\label{sec-intro}
\vspace{-0.05in}

{\em Johnson-Lindenstrauss (JL) Transform} is a popular random projection method for solving high-dimensional problems  in different areas~\citep{jltutorial}. 
Compared with the {\em data-aware} dimension reduction techniques ({\em e.g.,} PCA and feature selection)~\citep{cunningham2015linear}, JL transform is a {\em data-oblivious} technique that is more convenient to implement in practice, especially for distributed computing and streaming data. 

\begin{lemma}[\citep{johnson1984extensions}]
	\label{lem-jl}
	Given $\epsilon>0$, $n\in \mathbb{Z}^+$, and $\tilde{d}=O(\frac{\log n}{\epsilon^2})$, for any set $P$ of $n$ points in $\mathbb{R}^d$, there exists a mapping $f$: $\mathbb{R}^d\rightarrow \mathbb{R}^{\tilde{d}}$ such that for all $p, q\in P$, 
	\begin{eqnarray}
		\Big|||p-q||^2- ||f(p)-f(q)||^2\Big|\leq \epsilon||p-q||^2.\label{for-jl}
	\end{eqnarray}
\end{lemma}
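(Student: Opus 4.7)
The plan is to follow the standard probabilistic (random-projection) argument: exhibit a random linear map $f$ and show that it satisfies the distortion bound \eqref{for-jl} simultaneously for all pairs with positive probability, from which the existence of a deterministic $f$ follows.

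First I would define $f$ by setting $f(x) = \frac{1}{\sqrt{\tilde d}}\, A x$, where $A \in \mathbb{R}^{\tilde d \times d}$ is a random matrix whose entries are i.i.d.\ standard Gaussian (an alternative is a random orthonormal projection scaled by $\sqrt{d/\tilde d}$, but Gaussians are cleaner). Because $f$ is linear, for any fixed pair $p,q \in P$ we have $f(p) - f(q) = f(p-q)$, so it suffices to analyze $\|f(v)\|^2$ for a single nonzero vector $v = p-q$.

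The core step — and the main technical obstacle — is a concentration inequality stating that for every fixed nonzero $v$,
\begin{equation*}
\Pr\!\left[\big|\|f(v)\|^2 - \|v\|^2\big| > \epsilon \|v\|^2 \right] \le 2\exp\!\left(-c\, \tilde d\, \epsilon^2\right)
\end{equation*}
for some absolute constant $c>0$. I would prove this by normalizing $v$ to unit length (using rotational invariance of the Gaussian matrix), so that $\tilde d \,\|f(v)\|^2$ becomes a sum of $\tilde d$ independent squared standard Gaussians, i.e.\ a $\chi^2_{\tilde d}$ random variable. The desired two-sided tail bound then follows from a standard Chernoff/moment-generating-function computation on the chi-squared distribution, keeping track of constants so that $\epsilon\in(0,1)$ is handled.

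Finally I would take a union bound over all $\binom{n}{2}$ pairs in $P$: the probability that \eqref{for-jl} fails for at least one pair is at most $2\binom{n}{2}\exp(-c\,\tilde d\,\epsilon^2)$. Choosing $\tilde d = O(\log n / \epsilon^2)$ with a sufficiently large hidden constant makes this quantity strictly less than $1$, so with positive probability the random $f$ satisfies \eqref{for-jl} for every pair in $P$ simultaneously. Hence at least one such deterministic mapping $f$ exists, completing the proof. The only delicate ingredient is the chi-squared tail bound; everything else (linearity, rotational invariance, union bound, dimension choice) is routine once that estimate is in hand.
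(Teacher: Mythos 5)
Your proposal is correct and is essentially the standard argument the paper relies on: it cites this lemma from the literature and points to the random Gaussian matrix construction (as in \citep{dasgupta2003elementary}), which is exactly your route of linearity plus a chi-squared (Chernoff) concentration bound for a fixed difference vector followed by a union bound over the $\binom{n}{2}$ pairs. No gaps; the chi-squared tail estimate you flag is indeed the only technical ingredient, and your choice of $\tilde{d}=O(\log n/\epsilon^2)$ with a large enough constant closes the union bound as required.
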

Here, we use $||\cdot||$ to denote the Euclidean distance. \textbf{The mapping $f$ in Lemma~\ref{lem-jl} is called the ``JL transform''. }

The lemma shows that only $O(\frac{\log n}{\epsilon^2})$ dimensions, which is independent of $d$, are sufficient to approximately preserve their pairwise distances. 
In the past decades, a large amount of articles focused on the construction of ``$f$''. Actually, a simple way is to build a $\tilde{d}\times d$ matrix $A$ where each entry is an independent Gaussian $\mathcal{N}(0,1)$ random variable~\citep{dasgupta2003elementary}. 
\cite{achlioptas2003database} proposed a much easier implementation with entries belonging in $\{0, \pm 1\}$.
Inspired by {\em Heisenberg’s Uncertainty Principle}, \cite{ailon2009fast} provided a faster implementation  via {\em Walsh-Hadamard} matrix. Furthermore, a number of improvements on the sparsity of the matrix $A$ have been proposed, such as \citep{DBLP:journals/jacm/KaneN14,DBLP:conf/stoc/DasguptaKS10}.

\vspace{-0.12in}
\subsection{Our Contributions}
\label{sec-our}
\vspace{-0.08in}
In this paper, we study the effectiveness of JL transform   for the optimization problems with outliers in high dimensions. 
A real-world  dataset often contains a significant amount of outliers that can seriously affect the final result in  machine learning~\citep{zimek2012survey}.
Moreover, an attacker can inject a small set of outliers to the dataset so as to make the decision boundary severely deviate and cause unexpected mis-classification~\citep{biggio2018wild}. Therefore, designing robust machine learning algorithms  are urgently needed to meet these challenges. We consider two fundamental optimization problems in high dimensions:  \textbf{support vector machine (SVM) with outliers} and \textbf{$k$-center clustering with outliers}. Also, we assume that the outliers can be located arbitrarily   in the space ({\em e.g.,} by an attacker). Both of the two problems have many important applications for classification and data analysis~\citep{tan2006introduction}.  

Because we cannot impose any restriction on the outliers, existing methods for these two problems often have high complexities (more related works are shown in Section~\ref{sec-relate}).  In the past years, we are aware that JL transform has been widely studied for reducing the time complexities for the clustering~\citep{kerber2014approximation,boutsidis2014randomized,cohen2015dimensionality,makarychev2019performance} and SVM~\citep{DBLP:journals/ml/ArriagaV06,balcan2006kernels,kumar2008randomized,DBLP:conf/icml/ShiSHH12,DBLP:journals/tkdd/PaulBMD14} problems. \textbf{But its effectiveness for the case with arbitrary outliers has never been studied, to the best of our knowledge.}

Another key issue concerned in this paper is the ``recovery'' step. One may seek to recover   the solution from the dimensionality-reduced space to the original $\mathbb{R}^d$ with preserving the quality ({\em e.g.,} the radius for $k$-center clustering or the separating margin for SVM). The recovery step is important for clustering/classifying new coming data and extracting other useful information from the original space~\citep{zhang2013recovering}. For example, the normal vector of the SVM separating hyperplane in the original input space can provide us useful information on the features and their relations to classification. It is worth to emphasize that most of the previous articles on JL transform (even only for the case without outliers, {\em e.g.,} \citep{balcan2006kernels,DBLP:conf/icml/ShiSHH12,DBLP:journals/tkdd/PaulBMD14}) are in lack of the discussion on the recovery step. 

To analyze the quality for dimension reduction and recovery, \textbf{the major challenge} is that the outliers and inliers are mixed and an outlier can even become an inlier (and vice versa) in the dimensionality-reduced space (this is also the key difference that has never been considered in the previous articles on dimension reduction). 
In this paper, 
we provide a unified framework for dimension reduction and recovery with provable quality guarantee. \textbf{The framework can be summarized as the following three steps:} 
\vspace{-0.1in}
 \begin{enumerate}
\item Apply the JL transform $f$ to reduce the dimensionality of the input dataset $P$.
\item Run any existing algorithm $\mathcal{A}$ in the dimensionality-reduced space and obtain the solution $\bf{\bar{x}}$.
\item Recover the solution $f^{-1}(\bf{\bar{x}})$ in the original space by using the sparse approximation techniques. 
\end{enumerate}
\vspace{-0.1in}




The whole complexity consists of three parts corresponding to the above three steps. In practice, the time complexities of the first and third steps are neglectable. Also, since we reduce the dimension in  step 2, the whole complexity is reduced consequently. We will discuss the time complexities in Section~\ref{sec-svm} and \ref{sec-meb} with more details. 
Another highlight is that any existing algorithm for SVM with outliers or $k$-center clustering with outliers can be used as the black box in step~2. Hence the efficiency of our framework can be always improved if any new  algorithm is proposed in future.

\vspace{-0.1in}
\subsection{Other Related Works}
\label{sec-relate} 
\vspace{-0.05in}

\textbf{SVM and robust SVM.} SVM is a popular model for classification~\citep{journals/tist/ChangL11}. A number of methods  have been proposed, such as \citep{mach:Cortes+Vapnik:1995,platt99,conf/nips/CrispB99,bb57389}, for the SVM problem. One-class SVM has also been used as an efficient approach for anomaly detection~\citep{DBLP:journals/pr/ErfaniRKL16,DBLP:conf/nips/ScholkopfWSSP99}.  In recent years, a number of algorithms were also developed for dealing with the case that a significant fraction of outliers are mixed in the dataset~\citep{conf/aaai/XuCS06,yang2010relaxed,icml2014c2_suzumura14,ding2015random,DBLP:journals/pr/XuCHP17}.  But these algorithms usually are super-linear and cannot efficiently deal with large datasets in high dimensions. Also, several defending strategies for SVM against adversarial attacks were summarized in the recent survey paper~\citep{biggio2018wild}.

Beyond the aforementioned JL transform based dimension reduction methods in Section~\ref{sec-our}, a number of other methods ({\em e.g.,} feature selection) were also proposed for speeding up  SVM training~\citep{rahimi2008random,paul2015feature}. However, to the best of our knowledge, it is still unclear about their quality guarantees in the presence of outliers.

\textbf{$k$-center clustering with outliers.} \cite{charikar2001algorithms} proposed a $3$-approximation algorithm for $k$-center clustering with outliers in arbitrary metrics. 
The time complexity of their algorithm is quadratic in data size. A following streaming $(4+\epsilon)$-approximation algorithm was proposed by \citep{mccutchen2008streaming}. 
\cite{DBLP:conf/icalp/ChakrabartyGK16} gave a $2$-approximation algorithm for metric $k$-center clustering with outliers based on the LP relaxation techniques. 
\cite{BHI} proposed a coreset based approach but having an exponential time complexity if $k$ is not a constant. Recently, \cite{DBLP:conf/esa/DingYW19} provided a greedy algorithm that yields a bi-criteria approximation (returning more than $k$ clusters). The distributed algorithms include~\citep{malkomes2015fast,DBLP:journals/pvldb/CeccarelloPP19}. 

\vspace{-0.15in}

\section{Preliminaries}
\label{sec-pre}
\vspace{-0.1in}

As mentioned before, we do not impose any restriction on the outliers. We define the following combinatorial optimization problems following the  ``\textbf{trimming}'' idea from robust statistics~\citep{books/wi/RousseeuwL87}. 


For the SVM problem, we consider one-class and two-class separately. Let $o$ be the origin of the $\mathbb{R}^d$ space. Given a point $x$, denote by $\mathcal{H}_x$ the hyperplane passing through $x$ and being orthogonal to the vector $x-o$. We define $\mathcal{H}_x^+$ to be the closed half space that is bounded by $\mathcal{H}_x$ and excludes the origin $o$. We use $|\cdot|$ to denote the size of a given point set.

\begin{definition}[\textbf{One-class SVM with Outliers}]
\label{def-oneclass}
Given a set $P$ of $n$ points in $\mathbb{R}^d$ and $\gamma\in (0,1)$, the problem of one-class SVM with outliers is to find a point $x$ such that $|P\cap \mathcal{H}_x^+|=(1-\gamma)n$ and the (Euclidean) norm $||x||$ is maximized. 
``$||x||$'' actually is the margin width  separating $o$ and $\mathcal{H}_x$.
\end{definition}

\begin{definition} [\textbf{Two-class SVM with Outliers}]
\label{def-twoclass}
 Given two point sets $P_1$ and $P_2$ in $\mathbb{R}^d$ and two small parameters $\gamma_1, \gamma_2\in (0,1)$, the problem of two-class SVM with outliers is to find two subsets $P'_1\subset P_1$ and $P'_2\subset P_2$ with $|P'_1|=(1-\gamma_1)|P_1|$ and $|P'_2|=(1-\gamma_2)|P_2|$, and a margin separating  $P'_1$ and $P'_2$, such that the width of the margin is maximized. The margin is the region bounded by two parallel hyperplanes.
\end{definition}
 

\vspace{-0.05in}

\begin{definition}[\textbf{$k$-Center Clustering with Outliers}]
\label{def-outlier}
Given a set $P$ of $n$ points in $\mathbb{R}^d$, $k\in\mathbb{Z}^+$, and $\gamma\in (0,1)$, $k$-center clustering with outliers is to find a subset $P'\subseteq P$, where $|P'|=(1-\gamma)n$, and $k$ centers $\{c_1, \cdots, c_k\}\subset \mathbb{R}^d$, such that $\max_{p\in P'}\min_{1\leq j\leq k}||p-c_j||$ is minimized. 
\end{definition}
We use $\mathbb{B}(c, r)$ to denote the ball centered at a point $c$ with radius $r>0$. The problem of Definition~\ref{def-outlier} in fact is to find $k$ equal-sized balls, $\{\mathbb{B}(c_1, r), \cdots, \mathbb{B}(c_k, r)\}$, to cover at least $(1-\gamma)n$ points of $P$, and the radius $r$ is minimized. 

Obviously, in the above definitions, if we set $\gamma=0$ ($\gamma_1= \gamma_2=0$), the problems should be equivalent with the ordinary one-class (two-class) SVM and $k$-center clustering. 

\textbf{Approximation ratio.} We also need to clarify the ``approximation ratio'' for a solution. Let $\lambda\geq1$. For the SVM with outliers problems, denote by $w_{opt}$   the width of the optimal margin. Then, for any solution having the margin width $w\geq \frac{1}{\lambda} w_{opt}$, we say that it is a $\frac{1}{\lambda}$-approximate solution. For $k$-center clustering with outliers, denote by $r_{opt}$ the radius of the optimum solution. Then, for any solution having radius $r\leq \lambda r_{opt}$, we say that it is a $\lambda$-approximate solution. 

\vspace{-0.1in}

\subsection{Sparse Approximation Algorithms}
\label{sec-mgalg}
\vspace{-0.1in}

We  introduce two sparse approximation algorithms which will be used as the sub-routines in our framework.  

\begin{figure}[]
    \centering
  \includegraphics[height=1in]{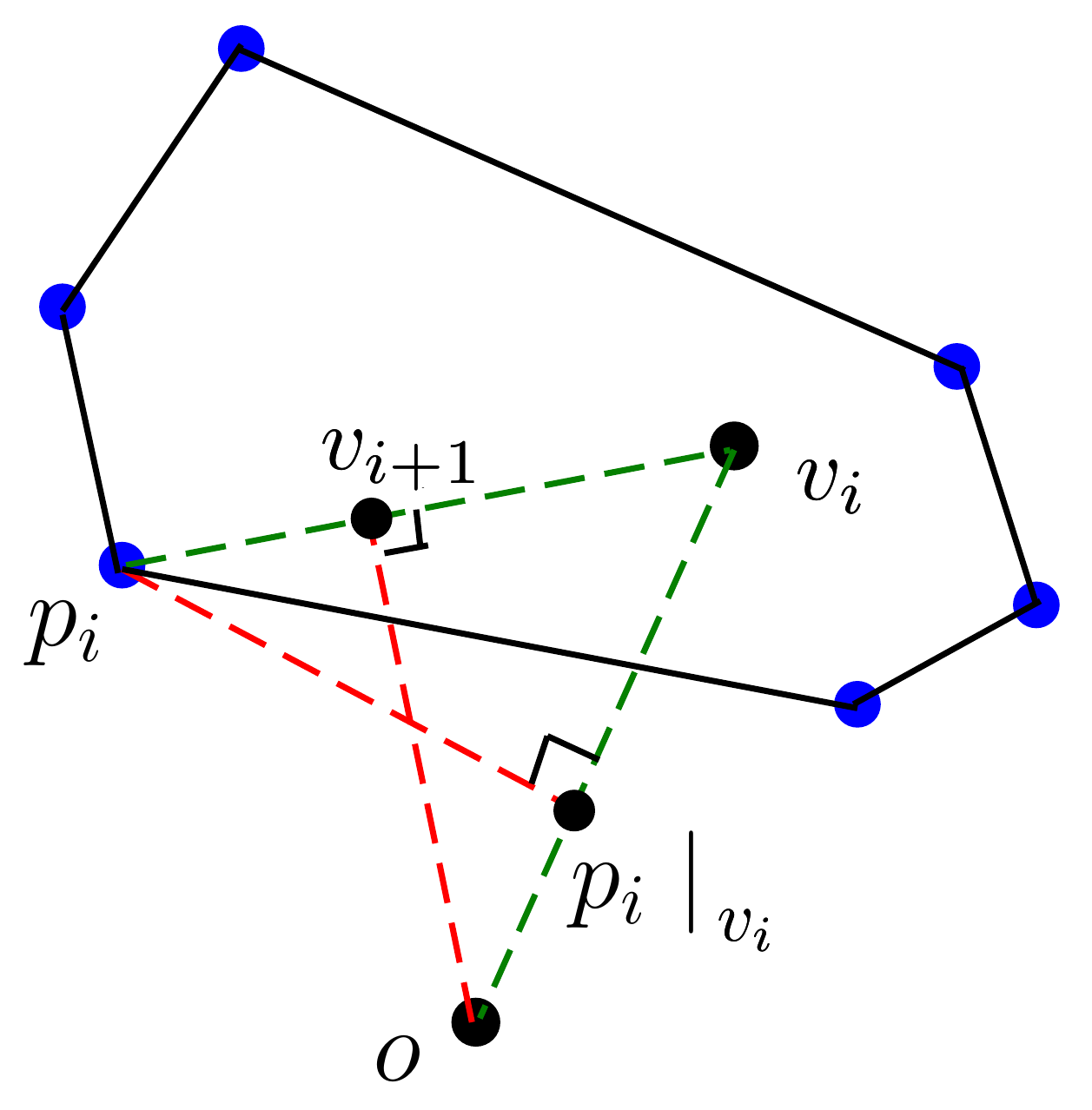}
       \caption{An illustration of step 2 in Algorithm~\ref{alg-gilbert}. After finding the point $p_i$, the algorithm updates $v_i$ to $v_{i+1}$.}
  \label{fig-gilbert}
  \vspace{-0.2in}
 \end{figure}

\textbf{Gilbert's algorithm~\citep{gilbert1966iterative}.} The algorithm is designed for solving the following  {\em polytope distance problem}.  
Assume $o$ is the origin and let $S$ be a given set of points in $\mathbb{R}^d$. The problem is to find a point $q$ inside the convex hull of $S$ (denoted as $\mathtt{conv}(S)$) such that the norm $||q||$ is minimized. The obtained norm,   $\min_{q\in \mathtt{conv}(S)}||q||$, is called the polytope distance between $o$ and $S$.


Before presenting the algorithm, we introduce the following definition first. For any two points $p$ and $q\in\mathbb{R}^d$, denote by $ p\mid_{q}$ the orthogonal projection of  $p$ on the supporting line of segment  $\overline{oq}$. The norm $||p\mid_{q}||$ is called the ``projection distance'' of $p$ to the origin $o$ on $\overline{oq}$.

\begin{definition}[\textbf{$\epsilon$-Approximation of Polytope Distance}]
\label{def-app-polytope}
Let $q \in \mathtt{conv}(S)$ and $\epsilon\in (0,1)$. The point $q$ is an $\epsilon$-approximation of the polytope distance  if  
$||q||\leq \frac{1}{1-\epsilon}||p\mid_{q}||$ for any $p\in S$. 
\end{definition}
\vspace{-0.05in}

Obviously, $\min_{p\in S}\{||p\mid_{q}||\}$ is no larger than the real polytope distance. Thus, if $q$ is an $\epsilon$-approximation, its norm $||q||$ is no larger than $\frac{1}{1-\epsilon}$ times the real polytope distance. Gilbert's algorithm (Algorithm~\ref{alg-gilbert}) is a standard greedy algorithm that improves the current solution by selecting the point $p_i\in S$ having the smallest projection distance in each iteration, until an $\epsilon$-approximation is achieved. 

\begin{algorithm}[h]
   \caption{Gilbert's Algorithm \citep{gilbert1966iterative}}
   \label{alg-gilbert}
\begin{algorithmic}
   \STATE {\bfseries Input:} A point set $S$ in $\mathbb{R}^d$, the origin $o$, and $\epsilon\in (0,1)$.
    \begin{enumerate}
   \item Select the point $v_1$ that is the closest point in $S$ to $o$.
   \item Initialize $i=1$. Iteratively perform the following steps until $v_i$ is an $\epsilon$-approximation as Definition~\ref{def-app-polytope}.
   \begin{enumerate}
   \item Find the point $p_i \in S$ who has the smallest projection distance to $o$ on $\overline{ov_{i}}$, {\em i.e.,} $p_i=\arg\min_{p\in S}\{||p\mid_{v_i}||\}$
   (see Figure~\ref{fig-gilbert}).
   \item Let $v_{i+1}$ be the point on segment $\overline{v_i p_i}$ closest to the origin $o$; update $i=i+1$.
    \end{enumerate}
    \end{enumerate}
        \STATE {\bfseries Output:} $v_i$.    
\end{algorithmic}
\end{algorithm}

\begin{proposition}[\citep{GJ09,C10}]
\label{the-gilbert-polytope}
Suppose $D=\max_{p,q\in S}||p-q||$ and $\rho\geq 0$ is the polytope distance between $o$ and $S$, and let $E=D^2/\rho^2$. 
For any $\epsilon\in (0,1)$,  Algorithm~\ref{alg-gilbert} takes at most $2\lceil 2E/\epsilon\rceil$ steps to achieve an $\epsilon$-approximation of polytope distance.
\end{proposition}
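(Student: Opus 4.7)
The plan is to analyze Gilbert's algorithm as a Frank--Wolfe iteration on the quadratic objective $f(v)=\tfrac{1}{2}||v||^2$ over $\mathtt{conv}(S)$, in the spirit of \citep{GJ09,C10}. First, I would reformulate both algorithmic choices in inner-product terms (reading $||p\mid_{v_i}||$ as the signed projection length onto the ray through $v_i$): picking $p_i=\arg\min_{p\in S}||p\mid_{v_i}||$ coincides with the Frank--Wolfe linear-minimisation oracle $p_i\in\arg\min_{p\in S}\langle p,v_i\rangle$, and setting $v_{i+1}$ to the closest point of $\overline{v_ip_i}$ to $o$ is exact line search on $f$ along $p_i-v_i$. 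Under this reading the $\epsilon$-approximation criterion of Definition~\ref{def-app-polytope} is equivalent to the multiplicative Frank--Wolfe gap condition $g(v_i)\le \epsilon||v_i||^2$, where $g(v_i):=||v_i||^2-\langle p_i,v_i\rangle$.

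Next I would work out the one-step decrement. Writing $v_{i+1}=(1-\lambda)v_i+\lambda p_i$ and minimising $||v_{i+1}||^2$ in $\lambda\in[0,1]$ gives
\[
||v_{i+1}||^2 \;\le\; ||v_i||^2-\frac{g(v_i)^2}{||v_i-p_i||^2} \;\le\; ||v_i||^2-\frac{g(v_i)^2}{D^2}
\]
whenever the unconstrained optimum $\lambda^\star=g(v_i)/||v_i-p_i||^2$ lies in $[0,1]$; the boundary case $\lambda^\star>1$ forces $v_{i+1}=p_i$ and yields the even larger drop $||v_{i+1}||^2\le||v_i||^2-g(v_i)$. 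At any non-terminating step, $g(v_i)>\epsilon||v_i||^2\ge\epsilon\rho^2$, so this decrement is at least $\epsilon^2\rho^4/D^2=\epsilon^2\rho^2/E$ in absolute terms and at least a factor $(1-\epsilon^2/E)$ in relative terms. I would then combine this with the convexity lower bound $g(v_i)\ge\tfrac12(||v_i||^2-\rho^2)$, which follows from $f(v_i)-f(v^\star)\le g(v_i)$, and split the execution into two sub-phases matching the outer factor of $2$ in the claim: a coarse sub-phase while $||v_i||^2$ is still more than a constant factor above $\rho^2$, in which the convexity bound alone forces $g(v_i)\ge||v_i||^2/4$ and a standard $1/||v_i||^2$ telescoping pushes $||v_i||^2$ into $[\rho^2,2\rho^2]$ within at most $\lceil 2E/\epsilon\rceil$ iterations; followed by a fine sub-phase in which another $\lceil 2E/\epsilon\rceil$ iterations suffice to reach the multiplicative termination criterion.

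The chief technical obstacle is securing the $O(E/\epsilon)$ rate rather than the naive $O(E/\epsilon^2)$ that a direct additive telescoping of $||v_i||^2$ produces: summing the absolute decrement $\epsilon^2\rho^2/E$ loses a factor of $\epsilon$ because the multiplicative termination criterion $g(v_i)>\epsilon||v_i||^2$ does not translate linearly into a floor on the progress of $||v_i||^2$. To recover the sharper rate one must use a potential that measures progress \emph{relative} to the remaining gap --- essentially $1/||v_i||^2$ or $1/(||v_i||^2-\rho^2)$, rather than $||v_i||^2$ itself --- so that each non-terminating iteration yields an \emph{additive} gain of $\Omega(\epsilon/E)$ in the potential, while the potential is bounded by an absolute constant. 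Verifying that the particular potential used in \citep{GJ09,C10} does grow at the required rate in both sub-phases is where the argument needs the most care; once that is in hand, summing the gains yields exactly $T\le 2\lceil 2E/\epsilon\rceil$.
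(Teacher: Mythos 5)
The paper itself does not prove this proposition --- it is quoted from \citep{GJ09,C10} --- so your attempt has to be measured against the standard analysis in those references, and your setup does follow it: reading Algorithm~\ref{alg-gilbert} as Frank--Wolfe with exact line search on $||v||^2$ over $\mathtt{conv}(S)$ is right, the stopping rule of Definition~\ref{def-app-polytope} is indeed equivalent to the gap condition $g(v_i)\le\epsilon||v_i||^2$ with $g(v_i)=||v_i||^2-\langle p_i,v_i\rangle$, the per-step decrement $||v_{i+1}||^2\le||v_i||^2-g(v_i)^2/||v_i-p_i||^2$ (with the clamped case giving a drop of at least $g(v_i)$, up to a factor $2$) is correct, as is the convexity bound $g(v_i)\ge\tfrac12(||v_i||^2-\rho^2)$, and the factor $2$ in the bound does come from a two-phase accounting.

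The genuine gap is in your fine sub-phase, which is exactly where the $O(E/\epsilon)$ rate has to be earned. You place the phase boundary at an $\epsilon$-independent threshold ($||v_i||^2\le 2\rho^2$), and then the fine phase has residual budget $\approx\rho^2$ while a non-terminating step is only guaranteed to decrease $||v_i||^2$ by $\approx\epsilon^2\rho^2/E$; additive counting gives $E/\epsilon^2$, and your proposed rescue does not work as described: combining $g(v_i)\ge\tfrac12(||v_i||^2-\rho^2)$ with $g(v_i)>\epsilon\rho^2$ only yields a multiplicative shrinkage of $||v_i||^2-\rho^2$ by $(1-\Theta(\epsilon/E))$ per step, hence $O\big((E/\epsilon)\log(E/\epsilon^2)\big)$ iterations (a spurious log), and no potential that is a bounded function of the iterate alone can gain $\Omega(\epsilon/E)$ at a step where the gap is barely $\epsilon\rho^2$ and $||v_i||^2-\rho^2\approx\rho^2$ --- the per-step progress there really is $\Theta(\epsilon^2/E)$. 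The fix, and what \citep{GJ09,C10} actually do, is to make the phase boundary $\epsilon$-dependent: run the $1/(||v_i||^2-\rho^2)$ telescoping (which you already have, at rate $\Omega(1/D^2)$ per step) until the primal error $||v_i||^2-\rho^2$ drops below $\approx\epsilon\rho^2$, which takes at most $\lceil 2E/\epsilon\rceil$ steps; from that point on, the plain additive budget argument you dismissed is exactly right, because the remaining budget is only $\approx\epsilon\rho^2$ while every non-terminating step still costs at least $\approx\epsilon^2\rho^2/E$ of it, so at most another $\lceil 2E/\epsilon\rceil$ such steps can occur. Balancing the threshold at $\epsilon\rho^2$ is what produces the stated $2\lceil 2E/\epsilon\rceil$; no relative potential is needed in the second phase.
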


\vspace{-0.05in}

The selected set of points $T=\{p_i\mid 1\leq i\leq 2\lceil 2E/\epsilon\rceil\}$ in Algorithm~\ref{alg-gilbert} is called the ``coreset'' of polytope distance. 

\textbf{BC's Algorithm\citep{badoiu2003smaller}.}  Let $0<\epsilon<1$ and $S$ be a set of points in $\mathbb{R}^d$. The BC's algorithm is an iterative procedure for computing an approximate  center of $\mathtt{MEB}(S)$ (we use $\mathtt{MEB}(\cdot)$ to denote the minimum enclosing ball of a given point set). Initially, it selects an arbitrary point from $S$ and places it into an initially empty set $T$. 
In each of the following $\lceil 2/\epsilon\rceil$ iterations, 
the algorithm updates the center of $\mathtt{MEB}(T)$ and adds to $T$ the farthest point from the current center of $\mathtt{MEB}(T)$. 
Finally, the center of $\mathtt{MEB}(T)$ yields a $(1+\epsilon)$-approximation for $\mathtt{MEB}(S)$. The selected set of $\lceil 2/\epsilon\rceil$ points ({\em i.e.}, $T$) is called the ``coreset'' of MEB.

\begin{remark}\textbf{(sparse approximation)}
\label{rem-represent}
A benefit of these two algorithms is that the solution, the point $v_i$ inside $\mathtt{conv}(S)$ or the approximate ball center of $S$, can always be represented by a convex combination of the obtained coreset $T$ (which in fact is a sparse approximation of the optimal solution). Moreover, the coefficients of this convex combination have been simultaneously generated after running the algorithm (since these coefficients are always recorded in each iteration). In our following algorithms, we will use these coefficients to recover the solution from the JL transform. 
\end{remark}

\section{SVM with Outliers}
\label{sec-svm}
\vspace{-0.1in}
To solve the SVM with outliers problems, we introduce an important geometric result below. 
Let $\Delta o y_1 y_2$ be a right triangle in a two-dimensional plane, where $o$ is the origin, $y_1=(a, 0)$, and $y_2=(a, b)$ with $a, b>0$ (see Figure~\ref{fig-tri}). Suppose the points $y_1$ and $y_2$ are moved to the new locations $y'_1=(a_0, 0)$ and $y'_2=(a', b')$, respectively. Consequently, they form a new triangle $\Delta o y'_1 y'_2$.

\begin{lemma}\textbf{(triangle preservation)}
	\label{lem-tri}
	Let $\delta$ be a small positive value. If the three sides of the new triangle $\Delta o y'_1 y'_2$ satisfy 
	
	\begin{equation}
		\left.\begin{aligned}
			||y'_1-o||^2&\in ||y_1-o||^2\pm \delta,\\
			||y'_2-o||^2&\in ||y_2-o||^2\pm \delta,\\
			||y'_1-y'_2||^2&\in ||y_1-y_2||^2\pm \delta,
		\end{aligned}
		\hspace{0.3in} \right  \} \label{for-tri-1}
		\qquad  
	\end{equation}
	then we have 
	$a'\geq \frac{2a^2-3\delta}{2\sqrt{a^2+\delta}}$.
\end{lemma}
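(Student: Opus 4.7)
The plan is to set up coordinates for the perturbed triangle, translate the three squared-distance conditions into algebraic equations, and then exploit a clean linearization that eliminates the $b'$ coordinate entirely.

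First I would write $y_1' = (a_0, 0)$ and $y_2' = (a', b')$ directly, so that the three hypotheses of \eqref{for-tri-1} become
\begin{align*}
a_0^2 &= a^2 + \epsilon_1, \\
a'^2 + b'^2 &= a^2 + b^2 + \epsilon_2, \\
(a_0 - a')^2 + b'^2 &= b^2 + \epsilon_3,
\end{align*}
for some errors $\epsilon_1, \epsilon_2, \epsilon_3$ with $|\epsilon_i|\leq \delta$. The key observation, and the reason the right-triangle structure matters, is that subtracting the third equation from the second cancels both the $b'^2$ term and the $b^2$ term, leaving a relation that is linear in $a'$:
\begin{equation*}
a'^2 - (a_0 - a')^2 \;=\; a^2 + \epsilon_2 - \epsilon_3.
\end{equation*}

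Next I would expand the left-hand side as $2 a_0 a' - a_0^2$ and substitute $a_0^2 = a^2 + \epsilon_1$ to obtain
\begin{equation*}
2 a_0 a' \;=\; 2a^2 + \epsilon_1 + \epsilon_2 - \epsilon_3,
\end{equation*}
so that $a' = \tfrac{2a^2 + \epsilon_1 + \epsilon_2 - \epsilon_3}{2 a_0}$. Bounding the numerator below by $2a^2 - 3\delta$ (assuming this quantity is nonnegative, which is the regime where the lemma is nontrivial) and the denominator above by $2\sqrt{a^2 + \delta}$, using $a_0 = \sqrt{a^2 + \epsilon_1}\leq \sqrt{a^2 + \delta}$, yields the claimed inequality $a' \geq \tfrac{2a^2 - 3\delta}{2\sqrt{a^2 + \delta}}$.

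The only delicate point, and the one I would flag as the main obstacle, is handling the sign of $a_0$: the equation $a_0^2 = a^2 + \epsilon_1$ only determines $|a_0|$, and the inequality we want to prove implicitly uses the convention $a_0 > 0$ (otherwise $a'$ would be pushed to the other side of the origin). I would justify choosing $a_0 = +\sqrt{a^2+\epsilon_1}$ by noting that this is a matter of orienting the coordinate frame of $\Delta o y_1' y_2'$ so that $y_1'$ lies on the positive $x$-axis, which we may always do without changing any of the three squared distances. Once this convention is fixed, the rest of the argument is purely algebraic manipulation of the three identities above.
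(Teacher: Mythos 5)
Your proof is correct and takes essentially the same route as the paper's: combining the hypotenuse and far-leg conditions cancels $b'^2$ (and $b^2$), yielding $2a_0 a' \geq 2a^2-3\delta$, after which $a_0 \leq \sqrt{a^2+\delta}$ gives the claimed bound. Your explicit attention to the sign convention $a_0>0$ and to the nonnegativity of $2a^2-3\delta$ (both left implicit in the paper, and automatically satisfied in its application, where $\delta \leq \tfrac{2}{5}a^2$) is a minor extra care rather than a different argument.
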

\begin{proof}
	
	\begin{figure}[]
		\centering
		\includegraphics[height=1.2in]{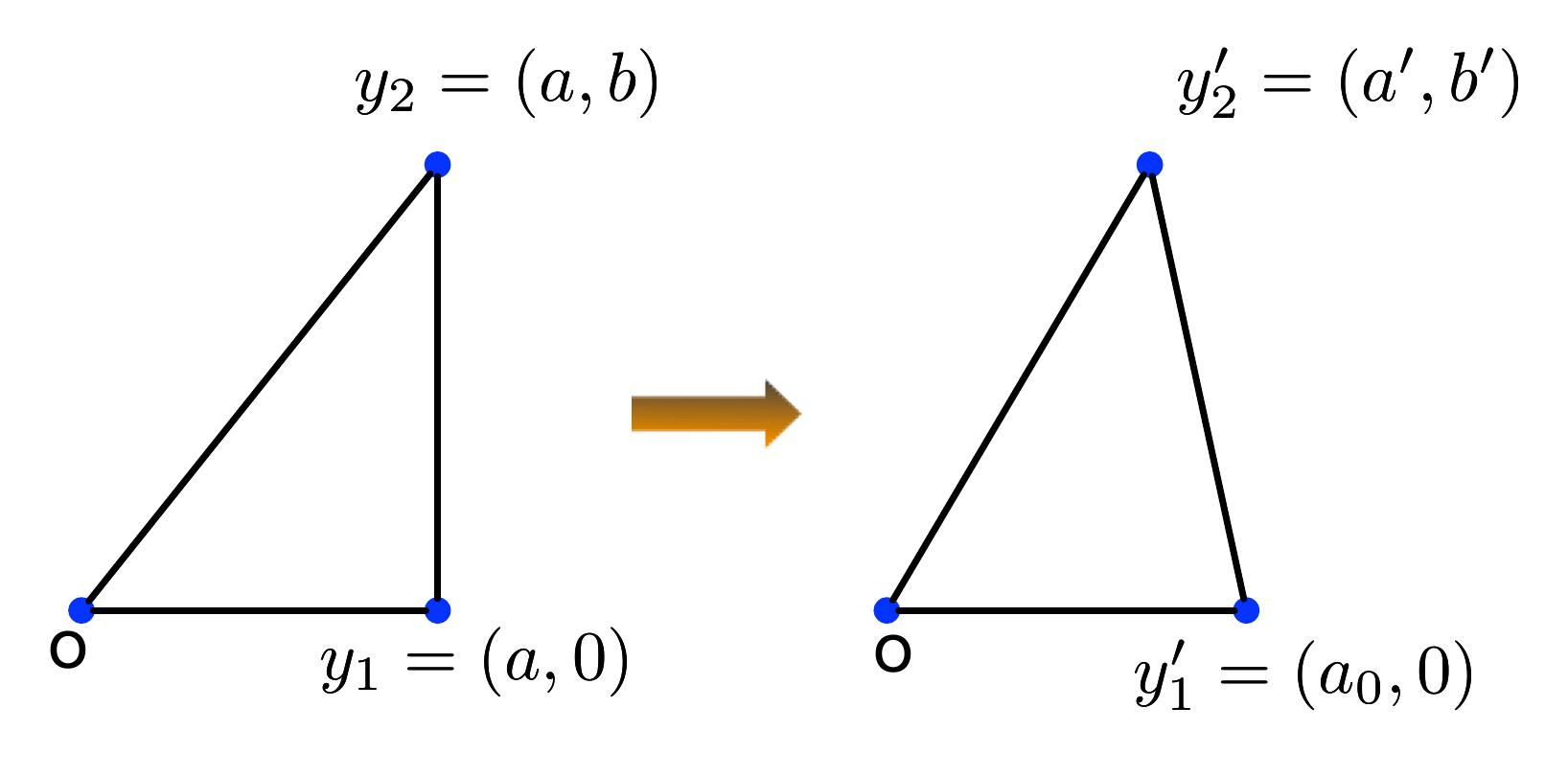}
		\vspace{-0.2in}
		\caption{An illustration for Lemma~\ref{lem-tri}.}
		\label{fig-tri}
		\vspace{-0.15in}
	\end{figure}
	
	From (\ref{for-tri-1}), we directly know
	\begin{eqnarray}
		a^2_0&\in& a^2\pm \delta;\label{for-tri-2}\\
		(a'-a_0)^2+(b')^2&\in&  b^2\pm \delta;\label{for-tri-3}\\
		(a')^2+(b')^2&\in& (a^2+b^2)\pm \delta.\label{for-tri-4}
	\end{eqnarray}
	Through (\ref{for-tri-3}) and (\ref{for-tri-4}), we obtain
	\begin{eqnarray}
		(a^2+b^2)-\delta&\leq&(a')^2+(b')^2\nonumber\\
		&=&(a'-a_0)^2+(b')^2+2a_0 a'-(a_0)^2\nonumber\\
		&\leq &  b^2+\delta+2a_0 a'-(a_0)^2. \nonumber\\
		\implies a^2+a_0^2&\leq&2a_0 a'+2\delta. \label{for-tri-5}
	\end{eqnarray}
	Further, we have
	\begin{eqnarray}
		2a^2-3\delta &\underbrace{\leq}_{\text{by (\ref{for-tri-2})}} &a^2-2\delta+a_0^2\nonumber\\
		&\underbrace{\leq}_{\text{by (\ref{for-tri-5})}} &2a_0 a'\underbrace{\leq}_{\text{by (\ref{for-tri-2})}} 2\sqrt{a^2+\delta} a'. \label{for-tri-6}
	\end{eqnarray}
	Finally, we have the inequality $a'\geq \frac{2a^2-3\delta}{2\sqrt{a^2+\delta}}$ from (\ref{for-tri-6}).
\end{proof}

\vspace{-0.1in}
\subsection{One-class SVM with Outliers}
\label{sec-oneclass}
\vspace{-0.05in}
To present our result, we need to relate polytope distance to one-class SVM through the following proposition. 
Let $S\subset \mathbb{R}^d$ and $x$ be a point in $\mathbb{R}^d$.

\begin{proposition}[\citep{GJ09,C10}]
	\label{lem-margin1}
	\textbf{(\rmnum{1})} If $x$ is the optimal solution of the polytope distance from $o$ to $S$, then $x$ also corresponds to the optimal solution for the one-class SVM on $S$, {\em i.e.,} the hyperplane $\mathcal{H}_x$ yields the maximum margin separating the origin and $S$.
	
	\textbf{(\rmnum{2})}  Suppose $x$ is an $\epsilon$-approximation of the polytope distance from the origin $o$ to $S$ (see Definition~\ref{def-app-polytope}). Let $y=(1-\epsilon) x$. Then the hyperplane $\mathcal{H}_{y}$ separates $o$ and $S$, and its width, {\em i.e.,} $||y||$, is at least $(1-\epsilon)$ times the maximum width.
\end{proposition}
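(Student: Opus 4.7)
The plan is to treat the two parts as two faces of the same convex-duality phenomenon between the nearest-point-in-polytope problem and the maximum-margin separating hyperplane.

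For part (i), I would start from the first-order optimality condition for the minimizer $x^*$ of $\|q\|^2$ over $q\in\mathtt{conv}(S)$. Since $x^*$ is a convex combination of points of $S$ and the objective is convex and smooth, any $p\in S$ must satisfy $\langle p - x^*,\, x^*\rangle \geq 0$, i.e., $\langle p, x^*\rangle \geq \|x^*\|^2$. This is precisely the statement that $p\in \mathcal{H}_{x^*}^{+}$, so $\mathcal{H}_{x^*}$ separates $o$ from $S$ and its distance to $o$ is $\|x^*\|$. For optimality of the margin, I would show that any hyperplane $\mathcal{H}_x$ that separates $o$ from $S$ (with $S\subseteq\mathcal{H}_x^+$) has $\|x\|\leq\|x^*\|$: every $p\in S$ obeys $\langle p,x\rangle\geq\|x\|^2$, and taking a convex combination that realizes $x^*$ gives $\langle x^*,x\rangle\geq\|x\|^2$, which by Cauchy--Schwarz forces $\|x\|\leq\|x^*\|$. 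Equivalently one can just note $\|x\|$ is the distance from $o$ to $\mathcal{H}_x$, and this is bounded above by $\min_{q\in\mathtt{conv}(S)}\|q\|=\|x^*\|$.

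For part (ii), I would unwind Definition~\ref{def-app-polytope}. The approximation hypothesis $\|x\|\leq\frac{1}{1-\epsilon}\|p\!\mid_x\|$ for every $p\in S$ is equivalent, since the projection distance equals $\langle p,x\rangle/\|x\|$ (and is nonnegative because $x\in\mathtt{conv}(S)$), to
\begin{equation*}
\langle p,x\rangle \geq (1-\epsilon)\|x\|^2 \quad \text{for all } p\in S.
\end{equation*}
Setting $y=(1-\epsilon)x$ yields $\langle p,y\rangle=(1-\epsilon)\langle p,x\rangle\geq(1-\epsilon)^2\|x\|^2=\|y\|^2$, so $p\in\mathcal{H}_y^{+}$ and $\mathcal{H}_y$ indeed separates $o$ and $S$. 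For the margin bound, I would combine two facts: (a) $x\in\mathtt{conv}(S)$ implies $\|x\|\geq\rho$, where $\rho$ is the polytope distance; and (b) by part (i), the optimal margin width equals $\rho$. Hence $\|y\|=(1-\epsilon)\|x\|\geq(1-\epsilon)\rho$, which is the claimed bound.

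The only delicate point is part (i): one must make sure the KKT/variational inequality is stated correctly and that the converse direction (that no separating hyperplane exceeds margin $\|x^*\|$) is justified cleanly—this is where one is essentially invoking LP/convex duality between the nearest-point and max-margin problems. Part (ii) is then a short algebraic verification once the projection distance is rewritten as $\langle p,x\rangle/\|x\|$, and the only subtlety there is checking that $\langle p,x\rangle\geq 0$ so that the norm-vs-signed-projection distinction does not cause trouble; this follows because the approximation condition would be vacuous otherwise for a point with $\langle p,x\rangle\leq 0$ (indeed, Algorithm~\ref{alg-gilbert} would have selected such a $p$ to make progress), or alternatively by noting $x\in\mathtt{conv}(S)$ together with part (i)'s optimality condition inherited in approximate form.
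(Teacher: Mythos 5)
The paper does not actually prove this proposition---it is imported from the cited works \citep{GJ09,C10}---so there is no in-paper argument to compare against; judged on its own, your proof is correct and is essentially the standard duality argument behind those citations. Part (i) is clean: the variational inequality $\langle p,x^*\rangle\geq\|x^*\|^2$ for all $p\in S$ gives separation by $\mathcal{H}_{x^*}$, and the converse bound (any separating $\mathcal{H}_x$ has $\|x\|\leq\min_{q\in\mathtt{conv}(S)}\|q\|$, since the closed halfspace $\mathcal{H}_x^+$ contains $\mathtt{conv}(S)$) is exactly the right observation; part (ii) is then the short computation $\langle p,y\rangle\geq\|y\|^2$ together with $\|x\|\geq\rho$. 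The one point to tighten is the sign issue you flag at the end. Under the paper's literal Definition~\ref{def-app-polytope}, $\|p\mid_x\|$ is an (unsigned) norm, and the claim in (ii) can genuinely fail: e.g.\ $S=\{(1,1),(-10,1)\}$ and $x=(1,1)\in\mathtt{conv}(S)$ satisfy the unsigned condition for every $\epsilon$, yet $\mathcal{H}_{(1-\epsilon)x}$ does not separate $o$ from $S$. So the correct reading (and the one used in \citep{GJ09,C10}) is the signed projection $\langle p,x\rangle/\|x\|\geq(1-\epsilon)\|x\|$, under which your algebra goes through verbatim. Your two suggested patches do not quite close this: the proposition concerns an arbitrary $\epsilon$-approximation, not necessarily one output by Algorithm~\ref{alg-gilbert}, so you cannot appeal to the algorithm's progress step; and an approximate minimizer of the norm does not inherit part (i)'s first-order condition in any form you can invoke without proof. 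The honest fix is simply to state that the definition is to be read with signed projections (equivalently, $\langle p,x\rangle\geq(1-\epsilon)\|x\|^2$ for all $p\in S$), after which your part (ii) is complete.
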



Proposition~\ref{lem-margin1} \textbf{(\rmnum{2})} suggests that we can find an approximate solution for one-class SVM through Gilbert's Algorithm. We further show how to apply it to the case with outliers. Given an instance $(P, \gamma)$ of one-class SVM with outliers, we suppose that $x_{opt}$ is the point yielding the optimal separating margin for the instance $(P, \gamma)$, and suppose $P_{opt}$ is the set of induced inliers; that is, the hyperplane $\mathcal{H}_{x_{opt}}$ separates the origin $o$ and $P_{opt}$. Similar with the notations defined in Proposition~\ref{the-gilbert-polytope}, we let $D=\max_{p, q\in P}||p-q||$, $\rho=||x_{opt}||$, and $E=D^2/\rho^2$. 

Our algorithm (Algorithm~\ref{alg-oneclass}) follows the generic framework introduced in Section~\ref{sec-our}. The high-level idea is as follows. First, we apply the JL transform $f$ to $P$ and solve the new instance $(f(P), \gamma)$ by any existing algorithm for one-class SVM with outliers. Suppose the obtained solution is $\bar{x}$ and it can be represented by a convex combination of a subset of $f(P)$ through Gilbert's algorithm. Then, we transform the solution $\bar{x}$ back to the original space $\mathbb{R}^d$ by using the coefficients of the convex combination (see Remark~\ref{rem-represent}). 

\begin{algorithm}[tb]
	\caption{\sc{ One-class SVM with outliers }}
	\label{alg-oneclass}
	\begin{algorithmic}
		\STATE {\bfseries Input:} A set $P$ of $n$ points in $\mathbb{R}^d$, $\epsilon, \epsilon_0>0$, and $\gamma\in (0,1)$.
		\begin{enumerate}
			\item Apply the JL transform $f$ to reduce the dimensionality $d$ to be $O(\frac{1}{\epsilon^2}\log n)$, and run any existing algorithm $\mathcal{A}$ for SVM with outliers on the new instance $(f(P), \gamma)$. 
			\item Suppose the normal vector returned by $\mathcal{A}$ is $v$. Then, we project all the points $f(P)$ to the vector $v$, and find the margin along $v$ such that exactly $(1-\gamma)n$ points of $f(P)$ are separated from the origin. These $(1-\gamma)n$ points form the set of inliers $f(S)$. 
			
			\item Compute an $\epsilon_0$-approximate solution for the polytope distance problem of $f(S)$ through Gilbert's algorithm, and denote by $\bar{x}$ the obtained point in $\mathtt{conv}(f(S))$. 
			\begin{itemize}
				\item In Gilbert's algorithm, $\bar{x}$ is represented as a convex combination of the points of $f(S)$, say $\bar{x}=\sum_{q\in f(S)}\alpha_q q$ where $\sum_{q\in f(S)}\alpha_q=1$ and $\alpha_q\geq 0$ for $\forall q\in f(S)$. 
				\item Let $f^{-1}(\bar{x})=\sum_{q\in f(S)}\alpha_q f^{-1}(q)$ (see Remark~\ref{rem-recovery} for the explanation on $f^{-1}$).
			\end{itemize}
		\end{enumerate}
		\STATE {\bfseries Output:}  $f^{-1}(\bar{x})$ as the solution. 
	\end{algorithmic}
\end{algorithm}

\begin{theorem}\textbf{(margin preservation)}
	\label{the-oneclass}
	We set $\epsilon_0\in (0,1)$ and  $\epsilon=\frac{1}{5}\frac{\epsilon_0}{E+1}$ in Algorithm~\ref{alg-oneclass}. 
	Suppose $\lambda\geq 1$ and the algorithm $\mathcal{A}$ used in Step 1 yields a $1/\lambda$-approximate solution of one-class SVM with outliers. The returned vector $f^{-1}(\bar{x})$ of Algorithm~\ref{alg-oneclass} yields a $\frac{1}{\lambda}(1-\epsilon_0)^3$-approximate solution for the instance $(P, \gamma)$  with constant probability\footnote{The ``constant probability'' directly comes from the success probability of JL transform. Let $\eta\in (0,1)$. If we set the reduced dimensionality $\tilde{d}=O(\frac{1}{\epsilon^2}\log\frac{n}{\eta})$, the success probability will be $1-\eta$~\citep{dasgupta2003elementary}. For simplicity, in this paper we always assume $\eta$ is a fixed small constant and say that the JL transform achieves a constant success probability.}.
\end{theorem}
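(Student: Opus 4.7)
My plan is to track the margin width through the three stages of Algorithm~\ref{alg-oneclass}, incurring a factor of roughly $(1-\epsilon_0/c)$ at each geometric step and $1/\lambda$ from the black-box solver, so that the product is at least $\tfrac{1}{\lambda}(1-\epsilon_0)^3$. Let $x_{opt}$ and $P_{opt}$ denote the optimum solution and its inlier set for $(P,\gamma)$, with $\rho=\|x_{opt}\|$. I would apply the JL map $f$ to $P\cup\{o\}$ (adding one point is harmless for the dimension bound), so that every pairwise squared distance, including the norm $\|p\|^2$ of every $p \in P$, is preserved within a $(1\pm\epsilon)$ factor. The driving quantitative fact from Lemma~\ref{lem-tri} is that if $\delta \le c\,\epsilon_0 a^2$ for a suitable constant $c$, then $a' \ge (1-\epsilon_0/4)\,a$; combined with $a \ge \rho$ and the JL-induced bound $\delta = O(\epsilon D^2)$, this is precisely why the theorem chooses $\epsilon = \Theta(\epsilon_0/E)$.

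For the forward direction, Proposition~\ref{lem-margin1}(\rmnum{1}) identifies $x_{opt}$ with the polytope distance point of $P_{opt}$, so $\|p\mid_{x_{opt}}\| \ge \rho$ for every $p \in P_{opt}$. I would form the right triangle $\Delta o\, y_1 y_2$ with $y_1 = p\mid_{x_{opt}}$ and $y_2 = p$, and its image under $f$; by linearity of $f$, $f(y_1)$ is a scalar multiple of $f(x_{opt})$, so $a'$ in Lemma~\ref{lem-tri} coincides with the projection of $f(p)$ onto the direction of $f(x_{opt})$. To bound the perturbation of the pair involving $y_1 \notin P$, I would expand $\|f(y_1)\|^2$ and $\|f(y_1)-f(p)\|^2$ through $x_{opt} = \sum_{p'}\beta_{p'} p'$ and the polarization identity, reducing every term to JL-preserved pairwise squared distances among sample points; this gives the desired $\delta \le O(\epsilon D^2)$. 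Lemma~\ref{lem-tri} then yields a hyperplane in the projected space of margin at least $(1-\epsilon_0/4)\rho$ that separates $f(P_{opt})$ from $o$, hence a feasible solution for $(f(P),\gamma)$ of that quality.

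Passing through $\mathcal{A}$ incurs a factor of $1/\lambda$. By the construction in step~2, $f(S)$ has polytope distance from $o$ at least the margin attained by $\mathcal{A}$, and Gilbert's algorithm returns $\bar{x}$ with $\|\bar{x}\|$ no more than $\tfrac{1}{1-\epsilon_0}$ times this polytope distance; Proposition~\ref{lem-margin1}(\rmnum{2}) then supplies the scaled normal $(1-\epsilon_0)\bar{x}$, which separates $f(S)$ from $o$ with margin $(1-\epsilon_0)\|\bar{x}\| \ge (1-\epsilon_0)\cdot \tfrac{1}{\lambda}(1-\epsilon_0/4)\rho$.

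The recovery step is the crux. I would compare the triangle $o,\, f^{-1}(\bar{x}),\, p$ in $\mathbb{R}^d$ with its image $o,\, \bar{x},\, q$ in the projected space, where $p=f^{-1}(q)$ for $q\in f(S)$. Because $f^{-1}(\bar{x}) = \sum_{q'\in f(S)} \alpha_{q'} f^{-1}(q')$ is a convex combination of sample points, expanding $\|f^{-1}(\bar{x})\|^2$ and $\|f^{-1}(\bar{x})-p\|^2$ via inner products and polarization reduces each term to JL-preserved pairwise squared distances in $P\cup\{o\}$, placing the two triangles within $\delta = O(\epsilon D^2)$ of each other. A second application of Lemma~\ref{lem-tri} delivers another $(1-\epsilon_0/4)$ factor, showing that the projection of every $p\in f^{-1}(S)$ onto $f^{-1}(\bar{x})$ is at least $(1-\epsilon_0/4)(1-\epsilon_0)\cdot \tfrac{1}{\lambda}(1-\epsilon_0/4)\rho$, which exceeds $\tfrac{1}{\lambda}(1-\epsilon_0)^3 \rho$ for small $\epsilon_0$; the validity of $(1-\gamma)n$ inliers is inherited because $|f^{-1}(S)|=|f(S)|=(1-\gamma)n$. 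The principal obstacle throughout is that the two pivotal vectors $x_{opt}$ and $f^{-1}(\bar{x})$ are not members of the JL sample, so JL's pairwise-distance guarantee must be lifted through their convex-combination representations and the polarization identity; this is also what forces the scaling $\epsilon = \Theta(\epsilon_0/E)$, since the absolute error $\delta$ grows with $D^2$ while the target quantity $a^2$ is only $\rho^2$.
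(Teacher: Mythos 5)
Your proposal follows the paper's route almost exactly: the same three-factor accounting (the optimal margin survives the projection, a $1/\lambda$ loss from $\mathcal{A}$, a $(1-\epsilon_0)$ loss from Gilbert's algorithm via Proposition~\ref{lem-margin1}~(\rmnum{2}), and one more geometric loss in the recovery), the same driving tool (Lemma~\ref{lem-tri} applied to the right triangles $\Delta o \,(p\mid_{x_{opt}})\, p$ and their analogues for $\bar{x}$ on the way back), and the same recovery mechanism through the convex-combination coefficients recorded by Gilbert's algorithm. The only genuine divergence is how the JL guarantee is extended to points that are not in the sample ($x_{opt}$, the projections $p\mid_{x_{opt}}$, $\bar{x}$, and $q\mid_{\bar{x}}$): the paper invokes Lemma~\ref{lem-radius} with $S=\{o\}\cup P_{opt}$ (radius at most $D+\rho$, hence $\delta=\epsilon(D+\rho)^2$), whereas you re-derive the needed bounds from plain JL on $P\cup\{o\}$ by polarization through the convex-combination representations; this is a legitimate alternative, since it is essentially how Lemma~\ref{lem-radius} is proved. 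One quantitative point needs care in your version: the projection point $p\mid_{x_{opt}}$ equals $t\,x_{opt}$ with $t=\|p\mid_{x_{opt}}\|/\rho$ possibly as large as $1+\sqrt{E}$, so its representation has coefficient sum $t$, not $1$, and the polarization bound for the two sides involving it is $O(\epsilon\, t^2 (D+\rho)^2)$ rather than your claimed $O(\epsilon D^2)$. The argument still goes through because $a^2=t^2\rho^2$ scales by the same $t^2$, so $\delta/a^2=O(\epsilon(E+1))=O(\epsilon_0)$ uniformly and Lemma~\ref{lem-tri} gives $a'\ge (1-O(\epsilon_0))\rho$; but as written, the chain ``$\delta\le O(\epsilon D^2)$ together with $a\ge\rho$'' is the load-bearing step and is not justified when $t$ is large, and with the theorem's fixed constant $\epsilon=\frac{1}{5}\frac{\epsilon_0}{E+1}$ you would need to redo the constants to confirm the per-step factors you assert. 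Two minor remarks: in the recovery step the right angle should sit at the projection $q\mid_{\bar{x}}$ of each $q\in f(S)$ onto $\bar{x}$ (compare $\Delta o\,(q\mid_{\bar{x}})\,q$ with its $f^{-1}$-image), not at $f^{-1}(\bar{x})$ itself; and $(1-\epsilon_0/4)^2(1-\epsilon_0)\ge(1-\epsilon_0)^3$ holds for every $\epsilon_0\in(0,1)$, so no smallness assumption on $\epsilon_0$ is needed at the end.
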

\begin{remark}\textbf{(the recovery step)}
\label{rem-recovery}
	The mapping $f$ is not a bijection, so we cannot directly define $f^{-1}$ for an arbitrary point in the space. However we can view $f$ as a bijection if restricting the domain to be the input set: from $P$ to $f(P)$. So $f^{-1}$ can be well defined for the points in $f(P)$ ({\em i.e.,} $f^{-1}\big(f(p)\big)=p$ for any $p\in P$). For an arbitrary point $q$ in the dimensionality-reduced space, if it can be represented as a convex combination of the points in $f(P)$, we can define $f^{-1}(q)$ as well based on the coefficients because $f$ is linear ({\em e.g.}, if $q=\frac{1}{3}f(p_1)+\frac{2}{3}f(p_2)$ with $p_1$ and $p_2\in P$, then $f^{-1}(q)=f^{-1}\big(\frac{1}{3}f(p_1)+\frac{2}{3}f(p_2)\big)=\frac{1}{3}p_1+\frac{2}{3}p_2$). By this way, we can define $f^{-1}(\bar{x})$ as the final output of Algorithm~\ref{alg-oneclass}.
\end{remark}

Before proving Theorem~\ref{the-oneclass}, we first introduce the following lemma that was proved in~\citep{DBLP:conf/compgeom/AgarwalHY07,DBLP:conf/compgeom/Sheehy14}. The result can  be viewed as an extension of the original JL Lemma (Lemma~\ref{lem-jl}). A key of Lemma~\ref{lem-radius} is that it relaxes the error bound to be a unified $\epsilon\hspace{0.02in} \mathtt{rad}(S)^2$ (we use $\mathtt{rad}(S)$ to denote the radius of the minimum enclosing ball of $S$), rather than the bound $\epsilon||p-q||^2$ in the inequality (\ref{for-jl}). 

\begin{lemma}
	\label{lem-radius}
	Let $f$ be the JL transform of $P$ from $\mathbb{R}^d$ to $\mathbb{R}^{O(\frac{1}{\epsilon^2}\log n)}$. With constant probability, for any subset $S\subseteq P$, any point $p\in S$,  and any point $q$ inside $\mathtt{conv}(S)$, we have  
	\begin{eqnarray}
		\big|||p-q||^2-||f(p)-f(q)||^2\big|\leq \epsilon\hspace{0.05in} \mathtt{rad}(S)^2.\label{for-radius-2}
	\end{eqnarray}
\end{lemma}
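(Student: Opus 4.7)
The plan is to leverage the linearity of the JL mapping together with the standard pairwise-distance guarantee of Lemma~\ref{lem-jl} applied once to the set $P$. Specifically, I would first invoke Lemma~\ref{lem-jl} with a rescaled parameter $\epsilon' = \epsilon/c$ for a small absolute constant $c$ (say $c=6$), which only changes the hidden constant in $\tilde{d} = O(\epsilon^{-2}\log n)$. This gives a single ``good event'' of constant probability on which every pair $p_j,p_k \in P$ satisfies $\bigl|\|p_j-p_k\|^2-\|f(p_j)-f(p_k)\|^2\bigr|\leq \epsilon'\|p_j-p_k\|^2$. The point is that this one event suffices to conclude the lemma simultaneously for all subsets $S\subseteq P$, all $p\in S$, and all $q\in \mathtt{conv}(S)$, so no extra union bound over subsets is needed.

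Next, fix $S$, $p\in S$, and $q\in\mathtt{conv}(S)$, and write $q=\sum_j\alpha_j p_j$ with $p_j\in S$, $\alpha_j\geq 0$, $\sum_j\alpha_j=1$. Since $f$ is a linear map (an $\tilde{d}\times d$ matrix), $f(q)=\sum_j\alpha_j f(p_j)$. Using $\sum_j\alpha_j=1$, I would rewrite $p-q=\sum_j\alpha_j(p-p_j)$ and $f(p)-f(q)=\sum_j\alpha_j(f(p)-f(p_j))$, then expand both squared norms as double sums
\begin{eqnarray*}
\|p-q\|^2 &=& \sum_{j,k}\alpha_j\alpha_k\langle p-p_j,\,p-p_k\rangle,\\
\|f(p)-f(q)\|^2 &=& \sum_{j,k}\alpha_j\alpha_k\langle f(p)-f(p_j),\,f(p)-f(p_k)\rangle,
\end{eqnarray*}
and apply the polarization identity $\langle a,b\rangle=\tfrac12(\|a\|^2+\|b\|^2-\|a-b\|^2)$ to each inner product in both expressions. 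Every squared norm that then appears is a pairwise squared distance among points of $P$, so the good event of step one controls each such term.

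Finally, I would subtract the two expansions and bound the error termwise. Each individual discrepancy $\bigl|\|u-v\|^2-\|f(u)-f(v)\|^2\bigr|$ with $u,v\in S$ is at most $\epsilon'\|u-v\|^2\leq 4\epsilon'\,\mathtt{rad}(S)^2$, using the elementary fact that any two points in $S$ lie in a common ball of radius $\mathtt{rad}(S)$ and are therefore at distance at most $2\,\mathtt{rad}(S)$. Three such terms per inner product, halved by polarization, give a per-pair bound of $6\epsilon'\,\mathtt{rad}(S)^2$; summing weighted by $\alpha_j\alpha_k$ (total mass $1$) yields the same bound overall, and picking $c=6$ gives exactly $\epsilon\,\mathtt{rad}(S)^2$. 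The only ``obstacle'' is bookkeeping: one must make sure to uniformize each pairwise error by $\mathtt{rad}(S)^2$ rather than leave it in the form $\epsilon'\|u-v\|^2$, otherwise the bound would depend on the geometry of the particular convex combination representing $q$. A cleaner organization is to prove the stated inequality first for inner products $\langle p-p_j,p-p_k\rangle$ and then lift to arbitrary $q\in\mathtt{conv}(S)$ by taking the convex combination in $j,k$; this isolates the polarization step from the convex-hull step and makes the constants transparent.
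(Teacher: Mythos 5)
Your proposal is correct. Note that the paper itself does not prove Lemma~\ref{lem-radius}; it imports it from \citep{DBLP:conf/compgeom/AgarwalHY07,DBLP:conf/compgeom/Sheehy14}, and your argument is essentially the standard derivation used there: a single rescaled pairwise JL event on $P$ (no union bound over subsets), linearity of $f$ so that $f(q)=\sum_j\alpha_j f(p_j)$, expansion of $\|p-q\|^2$ into pairwise squared distances among points of $S$ (your polarization step is equivalent to the barycentric identity $\|p-\sum_j\alpha_j p_j\|^2=\sum_j\alpha_j\|p-p_j\|^2-\tfrac12\sum_{j,k}\alpha_j\alpha_k\|p_j-p_k\|^2$ used in the cited works), and the uniformization $\|u-v\|^2\le 4\,\mathtt{rad}(S)^2$ for $u,v\in S$. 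The constant bookkeeping ($6\epsilon'$, hence $c=6$) checks out and is absorbed into the $O(\epsilon^{-2}\log n)$ target dimension, so this stands as a valid self-contained proof of the stated inequality.
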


	\begin{figure}[]
		\centering
		\includegraphics[height=1in]{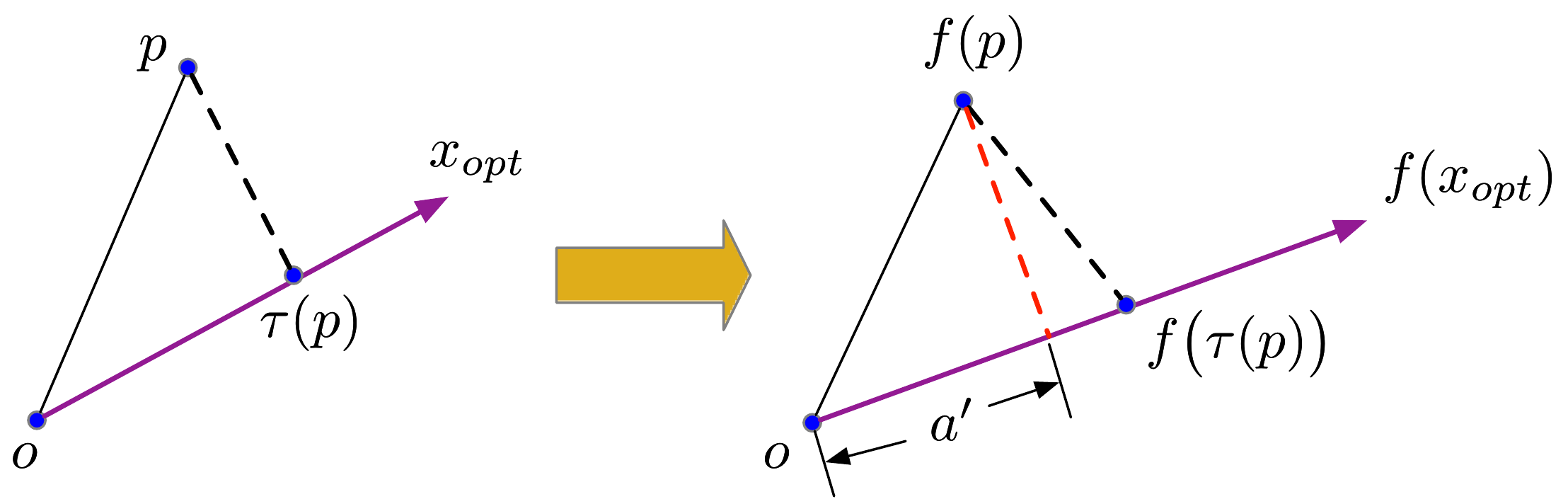}
		\caption{Because the JL transform $f$ is linear mapping, the point $f\big(\tau(p)\big)$ is located on the line determined by the vector $f(v_{opt})$. The point $f(p)$ has projection distance $a'$ to the origin $o$.}
		\label{fig-svm-proj}
		\vspace{-0.15in}
	\end{figure}

\begin{proof}\textbf{(of Theorem~\ref{the-oneclass})}
		For any point $p\in P_{opt}$, denote by $\tau(p)$ the projection of $p$ on the vector $x_{opt}$. Then, we focus on the right triangle $\Delta o \tau(p) p$ and its image, $\Delta o f\big(\tau(p)\big) f(p)$, in the lower dimensional space induced by the JL transform $f$ ({\em w.l.o.g.}, we assume the two spaces share the same origin $o$). See Figure~\ref{fig-svm-proj}. From Proposition~\ref{lem-margin1}, we know that $x_{opt}$ should be inside $\mathtt{conv}(P_{opt})$. Consequently, the triangle $\Delta o \tau(p) p$  is inside the convex hull of $\{o\}\cup P_{opt}$. Moreover, the set $\{o\}\cup P_{opt}$ is covered by a ball with radius no larger than $ D+\rho$. Therefore, through Lemma~\ref{lem-radius}, we have 
	\begin{eqnarray}
		||f\big(\tau(p)\big)-o||^2&\in& ||\tau(p)-o||^2\pm \epsilon (D+\rho)^2;\nonumber\\
		|| f(p)-o||^2&\in& ||p-o||^2\pm \epsilon (D+\rho)^2;\nonumber\\
		||f\big(\tau(p)\big)- f(p)||^2&\in& ||\tau(p)-p||^2\pm \epsilon (D+\rho)^2. \nonumber
	\end{eqnarray}
	We apply Lemma~\ref{lem-tri} to $\Delta o \tau(p) p$ by letting $y_1=\tau(p)$, $y_2=p$, $y'_1=f\big(\tau(p)\big)$, $y'_2=f(p)$, and $\delta= \epsilon (D+\rho)^2$. For convenience, we use the same notations as Lemma~\ref{lem-tri} and have
	\begin{eqnarray}
		a'\geq \frac{2a^2-3\delta}{2\sqrt{a^2+\delta}}=\sqrt{a^2+\delta}-\frac{5\delta}{2\sqrt{a^2+\delta}}.\label{for-oneclass-2}
	\end{eqnarray}
	Note that $a=||\tau(p)||\geq \rho$ and $\epsilon=\frac{1}{5}\frac{\epsilon_0}{E+1}$ with $E=D^2/\rho^2$; then we have
	\begin{eqnarray}
		a'&\geq& \sqrt{\rho^2+\delta}-\frac{5\delta}{2\sqrt{\rho^2+\delta}}\nonumber\\
		&\geq& \rho-\frac{5}{2}\frac{\delta}{\rho}\nonumber\\
		&\underbrace{=}_{\text{plug the values of $\delta$ and $\epsilon$}}&\rho-\frac{\epsilon_0}{2}\cdot\frac{(D+\rho)^2}{E+1}\cdot\frac{1}{\rho}\nonumber\\
		&\geq&\rho-\frac{\epsilon_0}{2}\cdot\frac{2D^2+2\rho^2}{E+1}\cdot\frac{1}{\rho}\nonumber\\
		&=&(1-\epsilon_0)\rho.\label{for-oneclass-3}
	\end{eqnarray} 
	(\ref{for-oneclass-3}) implies that for any point $p\in P_{opt}$, $f(p)$'s projection on the vector $f(x_{opt})$ has distance $\geq (1-\epsilon_0)\rho$ to the origin. 
	That is, the vector $ f(x_{opt}) $ yields a solution for the instance $\big(f(P), \gamma\big)$ with separating margin width at least $(1-\epsilon_0)\rho$. Further, 
	because $v$ yields a $\frac{1}{\lambda}$-approximate solution for $\big(f(P), \gamma\big)$ (step 2 of Algorithm~\ref{alg-oneclass}), we have 
	\begin{eqnarray}
		w_v\geq \frac{1}{\lambda}(1-\epsilon_0)\rho \label{for-oneclass-4}
	\end{eqnarray}
	where $w_v$ denotes the margin width induced by $v$. Also, since $\bar{x}$ is an $\epsilon_0$-approximation of the polytope distance problem for the instance $f(S)$, through Proposition~\ref{lem-margin1} we know that  the vector $\bar{x}$ yields a margin separating  the origin  and $f(S)$ with the width 
	\begin{eqnarray}
		||(1-\epsilon_0)\bar{x}||\geq (1-\epsilon_0)w_v\geq \frac{1}{\lambda}(1-\epsilon_0)^2\rho. \label{for-oneclass-7}
	\end{eqnarray}
	
	Then, we consider the inverse mapping $f^{-1}$ from $\mathbb{R}^{O(\frac{1}{\epsilon^2}\log n)}$ to $\mathbb{R}^d$. For any $q\in f(S)$, consider the right triangle $\Delta o \tau(q) q$ and its image, $\Delta o f^{-1}(\tau(q)) f^{-1}(q)$ in $\mathbb{R}^d$, where $\tau(q)$ is the projection of $q$ on the vector $\bar{x}$. Since $\tau(q)\in \mathtt{conv}\big(\{o\}\cup f(S)\big)$, it can be represented as a convex combination of $\{o\}\cup f(S)$; therefore, we can define its image $f^{-1}(\tau(q))$ as the convex combination of $\{o\}\cup S$ (see Remark~\ref{rem-recovery}). 
	So we can apply the same manner for proving the above (\ref{for-oneclass-3}) with replacing $f$ and $P_{opt}$ by $f^{-1}$ and $f(S)$ respectively. 
	For any point $q\in f(S)$, $f^{-1}(q)$'s projection on the vector $f^{-1}(\bar{x})$ has distance at least 
	\begin{eqnarray}
		(1-\epsilon_0)||(1-\epsilon_0)\bar{x}||\geq \frac{1}{\lambda}(1-\epsilon_0)^3\rho \label{for-oneclass-6}
	\end{eqnarray}
	to the origin by (\ref{for-oneclass-7}). In other words, if we let $S$ be the set of inliers, the vector 
	$f^{-1}(\bar{x})$ yields a $\frac{1}{\lambda}(1-\epsilon_0)^3$-approximation for the instance $(P, \gamma)$.
\end{proof}

\textbf{Time complexity.} Suppose the black box algorithm $\mathcal{A}$ of Algorithm~\ref{alg-oneclass} has the time complexity $\Gamma(n, d)$ for an instance of $n$ points in $d$-dimensional space. We reduce the dimensionality from $d$ to $\tilde{d}=O(\frac{1}{\epsilon^2}\log n)=O(\frac{E^2}{\epsilon_0^2}\log n)$ (since we require $\epsilon=\frac{1}{5}\frac{\epsilon_0}{E+1}$). Obviously, the total time complexity of Algorithm~\ref{alg-oneclass} consists of three parts: 
\begin{eqnarray}
\mathtt{Time}_{JL}+\Gamma(n, \tilde{d})+\mathtt{Time}_{rec},\label{for-time-one-class}
\end{eqnarray}
where $\mathtt{Time}_{JL}$ indicates the time complexity of JL transform, and $\mathtt{Time}_{rec}$ indicates the time complexity of the recovery step (Step 3). For JL transform, if we simply use the random Gaussian matrix~\citep{dasgupta2003elementary}, $\mathtt{Time}_{JL}$ will be $O(n\cdot d\cdot \tilde{d})$; the time complexity can be further improved by using the techniques like~\citep{ailon2009fast}. For the recovery step, we run Gilbert's algorithm in the dimensionality-reduced space and compute $f^{-1}(\bar{x})$ based on the coefficients in the original space; therefore, according to Proposition~\ref{the-gilbert-polytope}, $\mathtt{Time}_{rec}=O\big(\frac{E}{\epsilon_0}\cdot(n\cdot\tilde{d}+d)\big)$ (there are only $O(\frac{E}{\epsilon_0})$ non-zero coefficients). It is worth noting that $\mathtt{Time}_{JL}$ and $\mathtt{Time}_{rec}$ are usually much lower than the second term $\Gamma(n, \tilde{d})$. For example, most of the existing robust SVM algorithms (see Section~\ref{sec-relate}) take the time complexity at least quadratic in the number of input points $n$.

\subsection{Two-class SVM with outliers}

 Consider the connection between 
polytope distance and two-class SVM.
Let $Q_{1}$ and $Q_{2}$ be two point sets in $\mathbb{R}^d$.  The \textbf{Minkowski Difference} $\mathtt{MD}(Q_{1},Q_{2})$ is the set of all difference vectors from  $\mathtt{conv}(Q_1)$ and $\mathtt{conv}(Q_2)$, {\em i.e.,} 
$\mathtt{MD}(Q_1, Q_2) =\{u-v\mid u\in \mathtt{conv}(Q_1), v\in \mathtt{conv}(Q_2)\} $. Note that $\mathtt{MD}(Q_1, Q_2)$ is also a convex polytope. \cite{GJ09} showed that 
finding the shortest distance between two polytopes $\mathtt{conv}(Q_1)$ and $\mathtt{conv}(Q_2)$ is equivalent to finding the polytope distance from the origin to $\mathtt{MD}(Q_1,Q_2)$.
In other words, to find the maximum margin separating two point sets $Q_1$ and $Q_2$, we only need to find the maximum margin separating the origin and $\mathtt{MD}(Q_1, Q_2)$.  However, directly computing $\mathtt{MD}(Q_1,Q_2)$ takes quadratic time $O(|Q_1||Q_2| d)$. Actually, we do not need to explicitly compute $\mathtt{MD}(Q_1, Q_2)$ for obtaining its polytope distance to the origin. In each iteration of the Gilbert's algorithm, we just select the point $p_i$ that has the closest projection to $o$. In $\mathtt{MD}(Q_1,Q_2)$, this point ``$p_i$'' should be the difference vector $q_i-q'_i$, where $q_i\in Q_1$ and $q'_i\in Q_2$, and 
$$p_i\mid_{v_i}=q_i\mid_{v_i}-q'_i\mid_{v_i}.$$
Therefore, 
$q_i$ should be the point having the closest projection to $o$ from $Q_1$ and $q'_i$ should be the point having the farthest projection to $o$ from $Q_2$. As a consequence, we just need to select the points $q_i$ and $q'_i$ in each iteration, and thus the complexity for computing the polytope distance is still linear. The following theorem can be proved through the similar idea for the one-class case in Section~\ref{sec-oneclass}. 
\begin{theorem}
\label{the-twoclass}
Let $\epsilon_0\in (0,1)$ and  $\epsilon=\frac{1}{5}\frac{\epsilon_0}{E+1}$. 
Suppose $\lambda\geq 1$ and the algorithm $\mathcal{A}$ used in Step 1 of Algorithm~\ref{alg-twoclass} yields a $1/\lambda$-approximate solution of SVM with outliers. The returned vector $f^{-1}(\bar{x})$ of Algorithm~\ref{alg-twoclass} yields a $\frac{1}{\lambda}(1-\epsilon_0)^3$-approximate solution for the instance $\big(P_1, P_2, \gamma_1, \gamma_2\big)$  with constant probability.
\end{theorem}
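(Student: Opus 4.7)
The plan is to reduce two-class SVM with outliers to a polytope distance problem via the Minkowski difference, and then replay the one-class argument of Theorem~\ref{the-oneclass} verbatim. By the reduction recalled just before the theorem, the optimal margin of width $2\rho$ separating optimal inlier subsets $P_1^{opt}\subset P_1$ and $P_2^{opt}\subset P_2$ corresponds to the polytope distance $\rho = \|x_{opt}\|$ from the origin $o$ to $\mathtt{MD}(P_1^{opt}, P_2^{opt})$. Crucially, because $f$ is linear, $f(u-v)=f(u)-f(v)$ for any convex combinations, so the linear extension of $f$ carries $\mathtt{MD}(P_1^{opt}, P_2^{opt})$ exactly onto $\mathtt{MD}(f(P_1^{opt}), f(P_2^{opt}))$, and likewise carries $x_{opt}$ onto $f(x_{opt})$.

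First I would invoke Lemma~\ref{lem-radius} on $P=P_1\cup P_2$ to get squared-distance preservation, and then transfer it to the relevant Minkowski-difference vectors using linearity. For any $p=u-v \in \mathtt{MD}(P_1^{opt}, P_2^{opt})$, its projection $\tau(p)$ onto the supporting line of $x_{opt}$ lies in the convex hull of $\{o\}\cup \mathtt{MD}(P_1^{opt}, P_2^{opt})$, which is contained in a ball of radius $O(D+\rho)$ with $D=\max_{p,q\in P_1\cup P_2}\|p-q\|$. Hence the three squared distances associated with the right triangle $\Delta o\tau(p) p$ are preserved up to $\delta=\epsilon(D+\rho)^2$ under $f$. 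Applying Lemma~\ref{lem-tri} with the choice $\epsilon=\frac{1}{5}\frac{\epsilon_0}{E+1}$ shows, exactly as in (\ref{for-oneclass-3}), that the projection of $f(p)$ onto $f(x_{opt})$ has norm at least $(1-\epsilon_0)\rho$. Thus $f(x_{opt})$ yields a separating margin of width $\geq 2(1-\epsilon_0)\rho$ for the reduced instance $(f(P_1), f(P_2), \gamma_1, \gamma_2)$, so the $1/\lambda$-approximate normal vector returned by $\mathcal{A}$ produces margin width $\geq \frac{1}{\lambda}(1-\epsilon_0)\rho$.

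Next I would apply Gilbert's algorithm to the Minkowski difference of the inlier sets $f(S_1)$ and $f(S_2)$ obtained in the projection/trimming step of Algorithm~\ref{alg-twoclass}, using the efficient per-iteration implementation described at the end of Section~\ref{sec-svm} (pick the closest-projection point from $f(S_1)$ and the farthest from $f(S_2)$ without materialising the Minkowski difference). The output $\bar{x}\in\mathtt{conv}(\mathtt{MD}(f(S_1),f(S_2)))$ is an $\epsilon_0$-approximation of polytope distance, so by Proposition~\ref{lem-margin1} the vector $(1-\epsilon_0)\bar{x}$ separates $f(S_1)$ and $f(S_2)$ with width at least $\frac{1}{\lambda}(1-\epsilon_0)^2\rho$, mirroring (\ref{for-oneclass-7}).

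For the recovery step, Gilbert's algorithm expresses $\bar{x}=\sum_i\alpha_i(q_i-q_i')$ with $q_i\in f(S_1)$, $q_i'\in f(S_2)$ and $\sum_i\alpha_i=1$, so by Remark~\ref{rem-recovery} and linearity of $f$ I set $f^{-1}(\bar{x}):=\sum_i\alpha_i\bigl(f^{-1}(q_i)-f^{-1}(q_i')\bigr)$. A second application of the triangle-preservation argument, in reverse (with $f$ replaced by $f^{-1}$ and $\mathtt{MD}(S_1,S_2)$ playing the role of $P_{opt}$, exactly as done after (\ref{for-oneclass-7}) in the one-class proof), pays a third $(1-\epsilon_0)$ factor and yields the claimed $\frac{1}{\lambda}(1-\epsilon_0)^3$-approximate separating vector in $\mathbb{R}^d$. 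The main obstacle is the bookkeeping in the first step: one must verify that Lemma~\ref{lem-radius}, stated for points versus interior points of a convex hull, transfers to differences of such pairs via linearity of $f$, and that the diameter/radius parameter for the Minkowski difference can still be bounded by $O(D+\rho)$ so that the identical choice $\epsilon=\frac{1}{5}\frac{\epsilon_0}{E+1}$ suffices. Once this is checked, the rest of the argument follows the one-class proof line by line.
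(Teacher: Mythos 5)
Your proposal is correct and follows essentially the same route the paper intends: the paper gives no separate proof of Theorem~\ref{the-twoclass}, only the Minkowski-difference reduction plus the remark that the one-class argument of Theorem~\ref{the-oneclass} carries over, which is exactly what you reproduce (and your signed-combination definition of $f^{-1}(\bar{x})$ is in fact more careful than the convex-combination phrasing in Algorithm~\ref{alg-twoclass}). The only caveat is the one you flag yourself: the diameter of $\mathtt{MD}(P_1,P_2)$ can be as large as $2D$, so either $E$ must be interpreted with respect to the difference set or the constant $\frac{1}{5}$ in $\epsilon$ must shrink by a constant factor --- a bookkeeping point the paper also leaves implicit.
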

%
%
%

\begin{algorithm}[h]
   \caption{\sc{ Two-class SVM with outliers }}
   \label{alg-twoclass}
\begin{algorithmic}
   \STATE {\bfseries Input:} Two point sets $P_1$ and $P_2$  in $\mathbb{R}^d$ with $n=|P_1\cup P_2|$, $\epsilon, \epsilon_0>0$, and $\gamma_1, \gamma_2\in (0,1)$. 
      \begin{enumerate}
\item Apply the JL transform $f$ to reduce the dimensionality $d$ to be $O(\frac{1}{\epsilon^2}\log n)$, and run any existing algorithm $\mathcal{A}$ for SVM with outliers on the new instance $\big(f(P_1), f(P_2), \gamma_1, \gamma_2\big)$. 

\item Suppose the normal vector returned by $\mathcal{A}$ is $v$. Then, we project all the points of $f(P_1)$ and $f(P_2)$ to the vector $v$, and find the margin along $v$ such that exactly $(1-\gamma_1)|P_1|$ points of $f(P_1)$ and $(1-\gamma_2)|P_2|$ points of $f(P_2)$ are separated. These points form the sets of inliers $f(S_1)$ and $f(S_2)$, respectively.

\item Compute an $\epsilon_0$-approximate solution for the polytope distance between $f(S_1)$ and $f(S_2)$ through Gilbert's algorithm, and denote by $\bar{x}$ the obtained point in $\mathtt{MD}(f(S_1), f(S_2))$. 
\begin{itemize}
\item In Gilbert's algorithm, $\bar{x}$ is represented as a convex combination of the points of $f(S)$ where $S=S_1\cup S_2$, say $\bar{x}=\sum_{q\in f(S)}\alpha_q q$ where $\sum_{q\in f(S)}\alpha_q=1$ and $\alpha_q\geq 0$ for $\forall q\in f(S)$. 
\item Let $f^{-1}(\bar{x})=\sum_{q\in f(S)}\alpha_q f^{-1}(q)$.
\end{itemize}
\end{enumerate}
    \STATE {\bfseries Output:}  $f^{-1}(\bar{x})$ as the final solution. 
\end{algorithmic}
\end{algorithm}

\vspace{-0.1in}

\section{$k$-Center Clustering with Outliers}
\label{sec-meb}
\vspace{-0.1in}

 In this section, we also follow the generic framework introduced in Section~\ref{sec-our} and present our algorithm for $k$-center clustering with outliers (Algorithm~\ref{alg-kcenter}).  
To explain our idea more clearly, 
we first  consider the minimum enclosing ball (MEB) with outliers problem ({\em i.e.,} the case $k=1$), and then extend the result to the general $k$-center clustering with outliers problem. 

\begin{algorithm}[tb]
   \caption{\sc{ $k$-center clustering with outliers }}
   \label{alg-kcenter}
\begin{algorithmic}
   \STATE {\bfseries Input:} A set $P$ of $n$ points in $\mathbb{R}^d$, $k\in \mathbb{Z}^+$, and $\gamma, \epsilon\in (0,1)$.
      \begin{enumerate}
\item Apply the JL transform $f$ to reduce the dimensionality $d$ to be $O(\frac{1}{\epsilon^2}\log n)$, and run any existing algorithm $\mathcal{A}$ for $k$-center clustering with outliers on the new instance $(f(P), \gamma)$. 
\item Let $f(C_1), f(C_2), \cdots, f(C_k)$ be the $k$ clusters of $f(P)$ obtained by the algorithm $\mathcal{A}$. $\sum^k_{j=1}|f(C_j)|=(1-\gamma)n$. 

\item Compute a $(1+\epsilon)$-approximate MEB  for each $f(C_j)$ through  BC's algorithm, and denote by $\bar{c}_j$ the obtained ball center. 
\begin{itemize}
\item Each center $\bar{c}_j$ is represented as a convex combination of the points of $f(C_j)$, say $\bar{c}_j=\sum_{q\in f(C_j)}\alpha_q q$ where $\sum_{q\in f(C_j)}\alpha_q=1$ and $\alpha_q\geq 0$ for $\forall q\in f(C_j)$. 
\item Let the $k$ points $f^{-1}(\bar{c}_j)=\sum_{q\in f(C_j)}\alpha_q f^{-1}(q)$ for $1\leq j\leq k$ (see Remark~\ref{rem-recovery} for the explanation on $f^{-1}$).
\end{itemize}
\end{enumerate}
    \STATE {\bfseries Output:}  $f^{-1}(\bar{c}_j)$, $1\leq j\leq k$, as the $k$ cluster centers. 
\end{algorithmic}
\end{algorithm}

\vspace{-0.1in}

%
%


%
%
%

\begin{theorem}\textbf{(radius preservation)}
\label{the-jl}
Let $k=1$ and $\lambda\geq 1$. Suppose the algorithm $\mathcal{A}$ used in Step 1 of Algorithm~\ref{alg-kcenter} yields a $\lambda$-approximate solution of MEB with outliers. The returned point $f^{-1}(\bar{c}_1)$ of Algorithm~\ref{alg-kcenter} yields a $\lambda\sqrt{\frac{(1+\epsilon)^3}{1-\epsilon}}$-approximate solution for the original instance $(P,\gamma)$ in $\mathbb{R}^d$ with constant probability. 
\end{theorem}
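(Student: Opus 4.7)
The plan is to mirror the structure of the proof of Theorem~\ref{the-oneclass}, again using Lemma~\ref{lem-radius} as the bridge between distances in $\mathbb{R}^d$ and in the JL-reduced space. Let $c_{opt}$ be the optimal MEB-with-outliers center in $\mathbb{R}^d$ and $P_{opt}$ the associated set of $(1-\gamma)n$ inliers, so that $\|p-c_{opt}\|\le r_{opt}$ for all $p\in P_{opt}$, $c_{opt}\in\mathtt{conv}(P_{opt})$ (an MEB center always lies in the convex hull of its boundary points, which are among the enclosed inliers), and $\mathtt{rad}(P_{opt})=r_{opt}$.

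The first task is to transfer $r_{opt}$ into the projected space. Applying Lemma~\ref{lem-radius} with $S=P_{opt}$, each $p\in P_{opt}$, and $q=c_{opt}$ yields $\|f(p)-f(c_{opt})\|^2\le (1+\epsilon)\, r_{opt}^2$, so the projected instance $(f(P),\gamma)$ admits the enclosing ball $\mathbb{B}(f(c_{opt}),\sqrt{1+\epsilon}\, r_{opt})$ covering $(1-\gamma)n$ points; hence its optimum satisfies $r'_{opt}\le \sqrt{1+\epsilon}\, r_{opt}$. Because $\mathcal{A}$ is $\lambda$-approximate, the cluster $f(C_1)$ it outputs has MEB radius $\tilde{r}\le\lambda\sqrt{1+\epsilon}\, r_{opt}$; running BC's $(1+\epsilon)$-approximation on $f(C_1)$ then produces the center $\bar{c}_1$ with
\begin{equation}
\|q-\bar{c}_1\|^2\;\le\;(1+\epsilon)^2\tilde{r}^2\;\le\;\lambda^2(1+\epsilon)^3\, r_{opt}^2 \quad\text{for every } q\in f(C_1). \nonumber
\end{equation}

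The crucial step is the recovery. By construction $f^{-1}(\bar{c}_1)=\sum_q \alpha_q f^{-1}(q)\in\mathtt{conv}(C_1)$ and, by the linearity of $f$, $f\bigl(f^{-1}(\bar{c}_1)\bigr)=\bar{c}_1$. Set $r_{new}:=\max_{p\in C_1}\|p-f^{-1}(\bar{c}_1)\|$; since $f^{-1}(\bar{c}_1)$ supplies a feasible enclosing center for $C_1$, we have $\mathtt{rad}(C_1)\le r_{new}$. Applying Lemma~\ref{lem-radius} with $S=C_1$, $p\in C_1$, and $q=f^{-1}(\bar{c}_1)$ then yields
\begin{equation}
\|p-f^{-1}(\bar{c}_1)\|^2\;\le\;\|f(p)-\bar{c}_1\|^2+\epsilon\,\mathtt{rad}(C_1)^2\;\le\;\lambda^2(1+\epsilon)^3\, r_{opt}^2+\epsilon\, r_{new}^2. \nonumber
\end{equation}
Maximizing the left-hand side over $p\in C_1$ and rearranging gives $(1-\epsilon)\,r_{new}^2\le\lambda^2(1+\epsilon)^3\, r_{opt}^2$, i.e.\ $r_{new}\le\lambda\sqrt{(1+\epsilon)^3/(1-\epsilon)}\, r_{opt}$; because $|C_1|=(1-\gamma)n$, this is exactly the claimed approximation guarantee, and the constant-probability clause is inherited from the single JL draw used by Lemma~\ref{lem-radius}.

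The main subtlety I expect is the self-referential closure in the recovery step: the Lemma~\ref{lem-radius} error term scales with $\mathtt{rad}(C_1)^2$, a quantity we do not directly know in the original space, so I bound it by $r_{new}$ itself (the radius of the concrete enclosing ball centered at $f^{-1}(\bar{c}_1)$) and solve the resulting implicit inequality; this maneuver is precisely what introduces the $1/(1-\epsilon)$ factor in the final bound. Everything else is straightforward bookkeeping of the three multiplicative $(1+\epsilon)$ contributions, coming respectively from the JL image of the optimal ball, the approximation ratio of $\mathcal{A}$ in the reduced space, and BC's $(1+\epsilon)$-approximation for MEB.
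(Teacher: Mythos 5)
Your proof is correct and takes essentially the same route as the paper's: both use Lemma~\ref{lem-radius} forward to get $r'_{opt}\le\sqrt{1+\epsilon}\,r_{opt}$, combine the $\lambda$-approximation of $\mathcal{A}$ with BC's $(1+\epsilon)$-approximation to bound $\max_{q\in f(C_1)}\|q-\bar{c}_1\|$, and then apply Lemma~\ref{lem-radius} backward together with $f\big(f^{-1}(\bar{c}_1)\big)=\bar{c}_1$ to control $\max_{p\in C_1}\|p-f^{-1}(\bar{c}_1)\|$. The only difference is a minor bookkeeping choice: you absorb the error term $\epsilon\,\mathtt{rad}(C_1)^2$ by bounding $\mathtt{rad}(C_1)$ with the recovered radius itself and solving the resulting implicit inequality, whereas the paper routes it through $\mathtt{rad}(C_1)^2\le\frac{1}{1-\epsilon}\mathtt{rad}\big(f(C_1)\big)^2\le\frac{1}{1-\epsilon}\max_{q\in C_1}\|f(q)-\bar{c}_1\|^2$; both give the same $1/(1-\epsilon)$ factor and the same final approximation ratio.
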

\begin{proof}
We use $P_{opt}$ and $r_{opt}$ to denote the optimal subset and radius for the instance $(P, \gamma)$. Let $c_{opt}$ be the center of $\mathtt{MEB}(P_{opt})$. Obviously $c_{opt}$ is inside the convex hull of $P_{opt}$. Thus, using the inequality (\ref{for-radius-2}) in Lemma~\ref{lem-radius}, we have
\begin{eqnarray}
\forall p\in P_{opt}, \hspace{0.1in} ||f(p)-f(c_{opt})||^2&\leq& ||p-c_{opt}||^2+\epsilon r^2_{opt}\nonumber\\
&\leq& (1+\epsilon) r^2_{opt}\nonumber\\
\Longrightarrow\max_{p\in P_{opt}}||f(p)-f(c_{opt})||^2&\leq& (1+\epsilon) r^2_{opt}. \label{for-jl1} 
\end{eqnarray}
That is, the ball $\mathbb{B}\big(f(c_{opt}), \sqrt{1+\epsilon}r_{opt}\big)$ covers the whole set $f(P_{opt})$. Therefore, if we denote the optimal radius of the instance $\big(f(P), \gamma\big)$ as $r'_{opt}$, we have 
\begin{eqnarray}
r'_{opt}\leq \sqrt{1+\epsilon}\cdot r_{opt}. \label{for-jl6}
\end{eqnarray}
Next, we prove the upper bound of the radius induced by $f^{-1}(\bar{c}_1)$. Since 
 $\bar{c}_1$ is inside the convex hull of $f(C_1)$,  the point $f^{-1}(\bar{c}_1)$ should be inside the convex hull of $C_1$. Also, because the JL transform $f$ is linear, 
 \begin{eqnarray}
 f\big(f^{-1}(\bar{c}_1)\big)=\sum_{q\in f(C_1)}\alpha_q f\big(f^{-1}(q)\big)=\bar{c}_1. \label{for-inverse1}
 \end{eqnarray}
 Applying the inequality (\ref{for-radius-2})  again, $\forall p\in C_1$, we have 
\begin{eqnarray}
&&||p-f^{-1}(\bar{c}_1)||^2\nonumber\\
&\underbrace{\leq}_{f\big(f^{-1}(\bar{c}_1)\big)=\bar{c}_1}& ||f(p)-\bar{c}_1||^2+\epsilon \hspace{0.05in} \mathtt{rad}(C_1)^2\nonumber\\
&\leq& \max_{q\in C_1}||f(q)-\bar{c}_1||^2+\epsilon \hspace{0.05in} \mathtt{rad}(C_1)^2\nonumber\\
&\leq& \max_{q\in C_1}||f(q)-\bar{c}_1||^2+\frac{\epsilon}{1-\epsilon} \hspace{0.05in} \mathtt{rad}\big(f(C_1)\big)^2\nonumber\\
&\leq& \frac{1}{1-\epsilon}\max_{q\in C_1}||f(q)-\bar{c}_1||^2,\label{for-jl2}
\end{eqnarray}
where the last inequality comes from the fact $\mathtt{rad}\big(f(C_1)\big)^2\leq \max_{q\in C_1}||f(q)-\bar{c}_1||^2$ (otherwise, we can let $\bar{c}_1$ be the center of $f(C_1)$ and the resulting radius  $ \max_{q\in C_1}||f(q)-\bar{c}_1||^2<\mathtt{rad}\big(f(C_1)\big)^2$, which is a contradiction). Recall that $\bar{c}_1$ is a $(1+\epsilon)$-approximate center of the $\lambda$-approximate solution of the instance $\big(f(P), \gamma\big)$. So 
$\max_{q\in C_1}||f(q)-\bar{c}_1||\leq (1+\epsilon)\cdot\lambda \cdot r'_{opt}$. 
Thus, (\ref{for-jl2}) implies that
\begin{eqnarray}
\forall p\in C_1, \hspace{0.1in} ||p-f^{-1}(\bar{c}_1)||^2&\leq& \frac{(1+\epsilon)^2}{1-\epsilon} \hspace{0.05in} \lambda^2 (r'_{opt})^2\nonumber\\
&\leq&  \frac{(1+\epsilon)^3}{1-\epsilon}\lambda^2 r^2_{opt},\label{for-jl7}
\end{eqnarray}
where the last inequality comes from (\ref{for-jl6}). The inequality (\ref{for-jl7}) indicates that the ball $\mathbb{B}\big(f^{-1}(\bar{c}_1), \sqrt{\frac{(1+\epsilon)^3}{1-\epsilon}}\lambda r_{opt}\big)$ covers the whole set $C_1$, {\em i.e.,} $f^{-1}(\bar{c}_1)$ yields a $\sqrt{\frac{(1+\epsilon)^3}{1-\epsilon}}\lambda$-approximate solution for the instance $(P, \gamma)$.
 \end{proof}


\textbf{From MEB to $k$-center.} The proof of Theorem~\ref{the-jl} can be  extended for $k$-center clustering. We also use $r_{opt}$ and $r'_{opt}$ to denote the optimal radii of the instances $(P, \gamma)$ and $\big(f(P), \gamma\big)$, respectively. Then, we have the same claim as  (\ref{for-jl6}), $r'_{opt}\leq \sqrt{1+\epsilon}r_{opt}$. 
Consequently, the inequality (\ref{for-jl7}) is replaced by: for any $1\leq j\leq k$ and any $p\in C_j$,
\begin{eqnarray}
  ||p-f^{-1}(\bar{c}_j)||^2&\leq&  \frac{(1+\epsilon)^3}{1-\epsilon}\lambda^2 r^2_{opt}.
\end{eqnarray}
Thus, the set $\cup^k_{j=1}C_j$ is covered by the union of the $k$ balls $\cup^k_{j=1}\mathbb{B}\big(f^{-1}(\bar{c}_j), \sqrt{\frac{(1+\epsilon)^3}{1-\epsilon}}\lambda r_{opt}\big)$, {\em i.e.,} $\{f^{-1}(\bar{c}_1), f^{-1}(\bar{c}_2), \cdots, f^{-1}(\bar{c}_k)\}$ yields a $\sqrt{\frac{(1+\epsilon)^3}{1-\epsilon}}\lambda$-approximate solution for the instance $(P, \gamma)$.

\newcounter{sd4}
\begin{figure*}[]
	\begin{center}
		\centerline{\includegraphics[width=0.5\columnwidth]{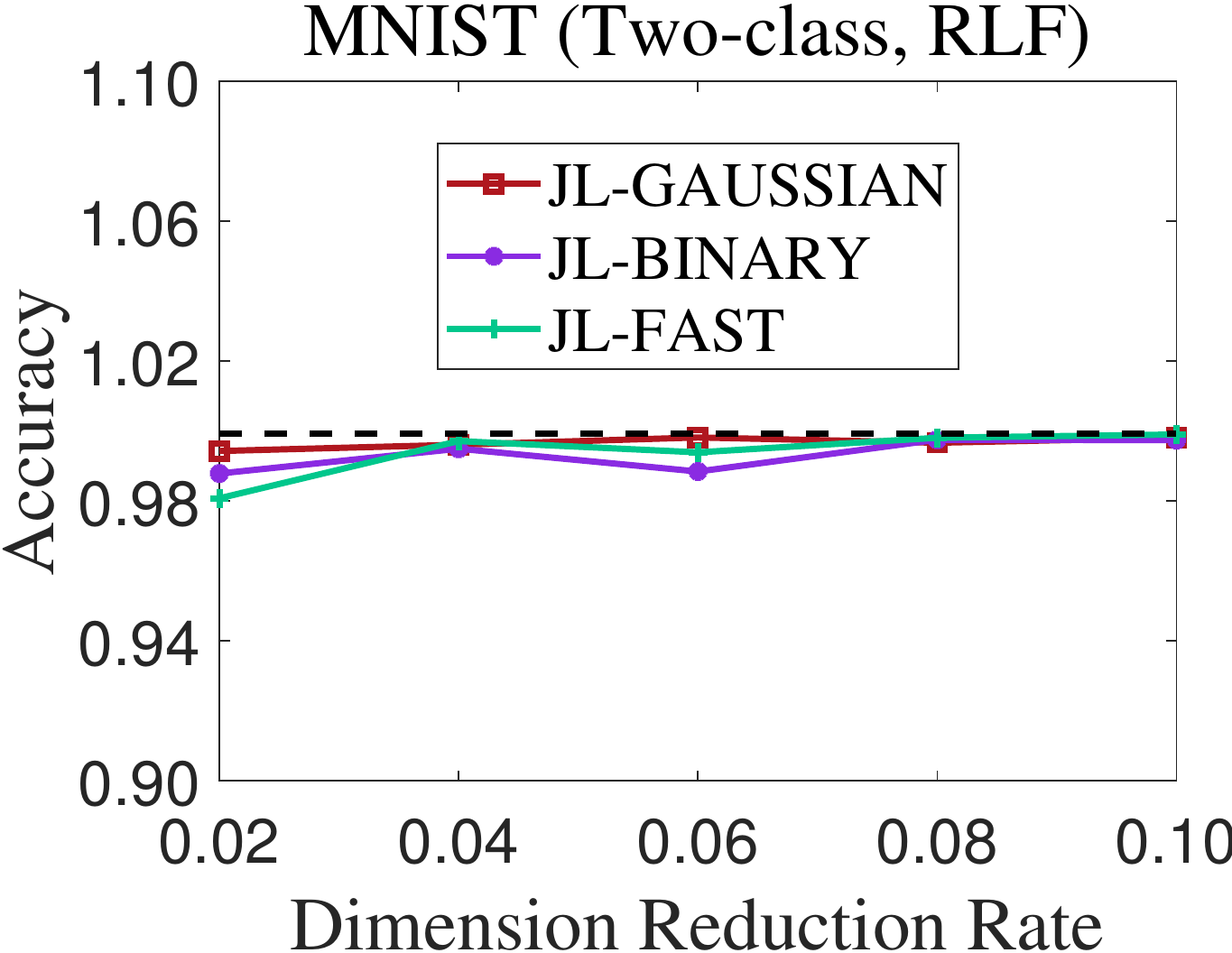}
			\includegraphics[width=0.5\columnwidth]{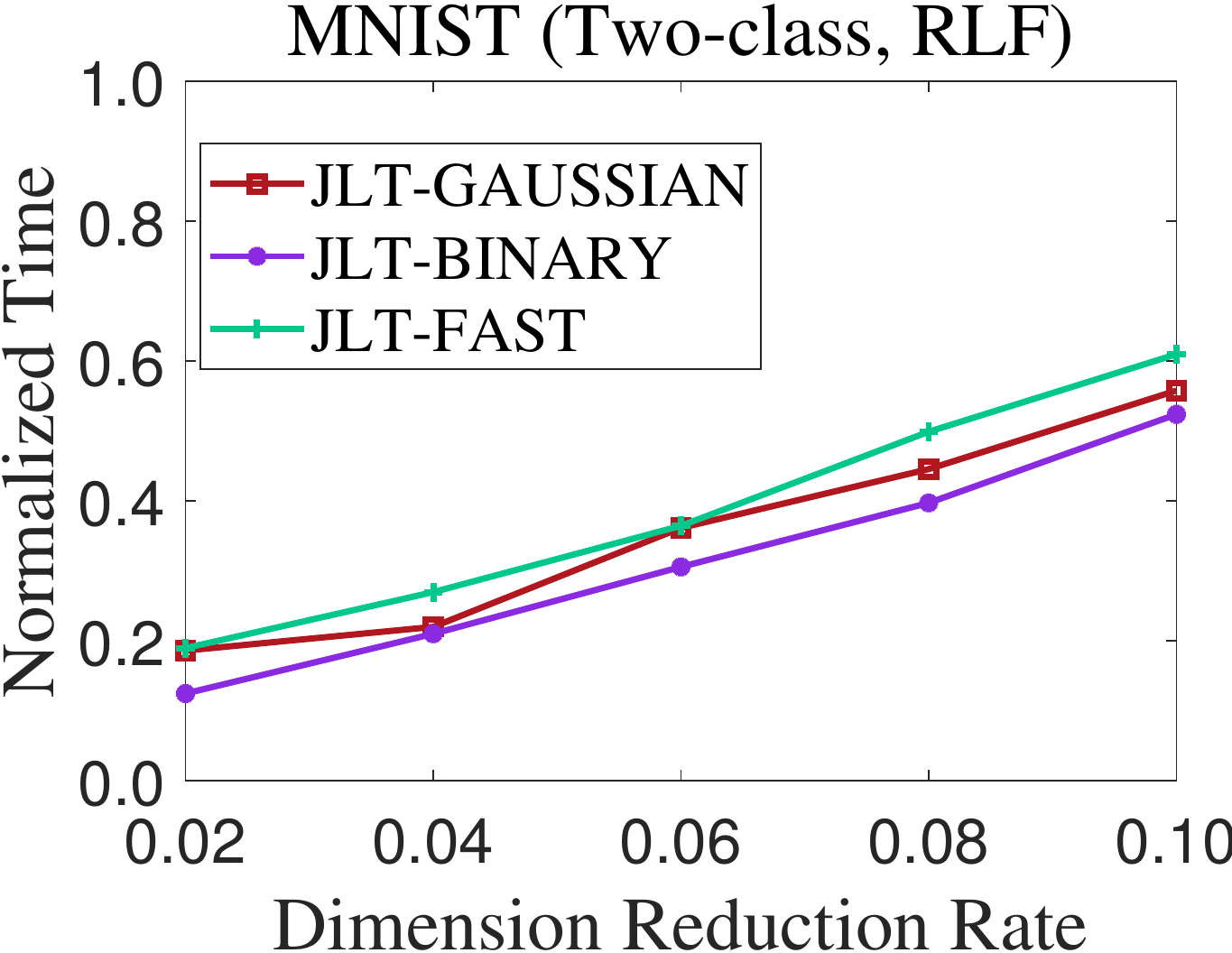}
			\includegraphics[width=0.5\columnwidth]{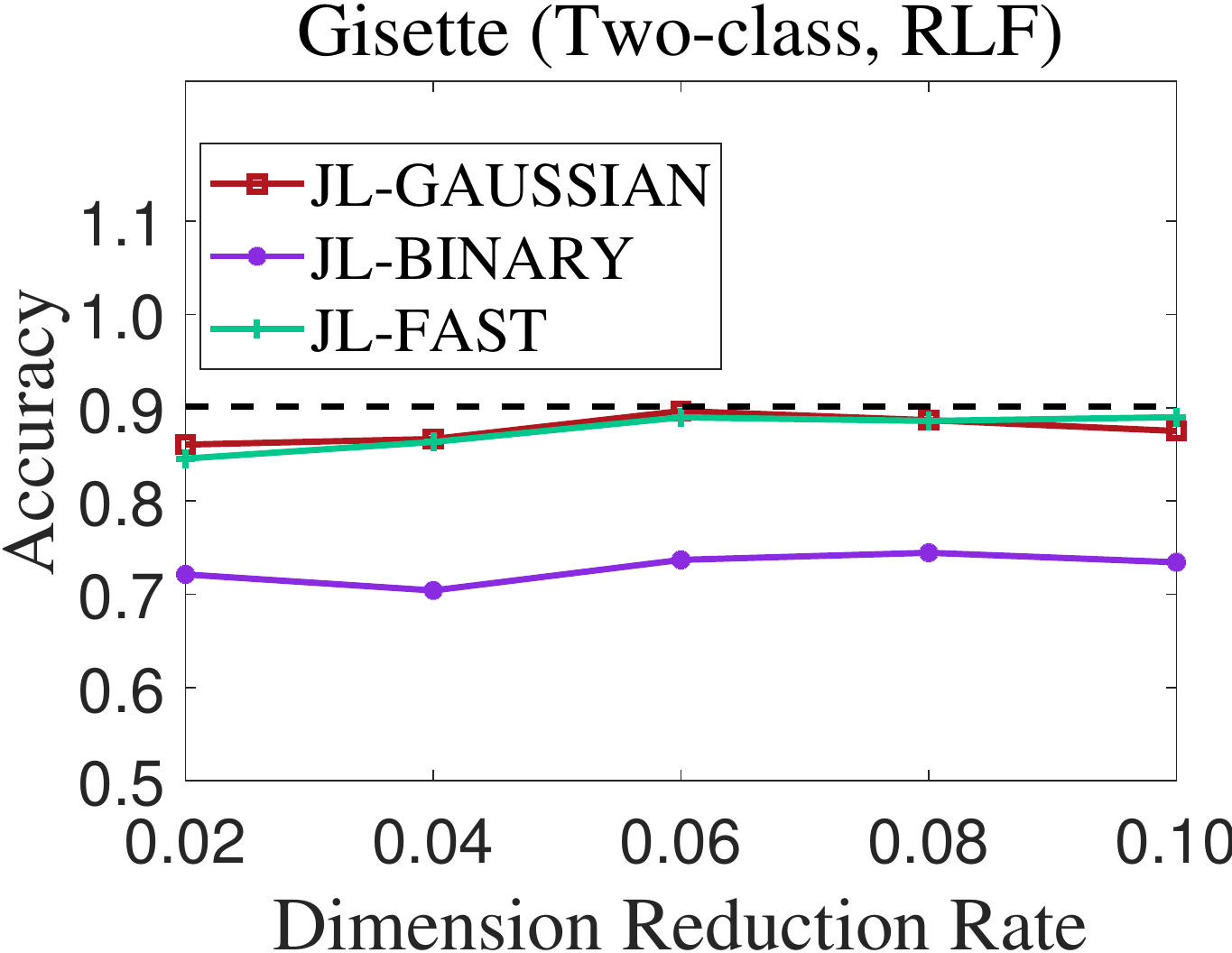}
			\includegraphics[width=0.5\columnwidth]{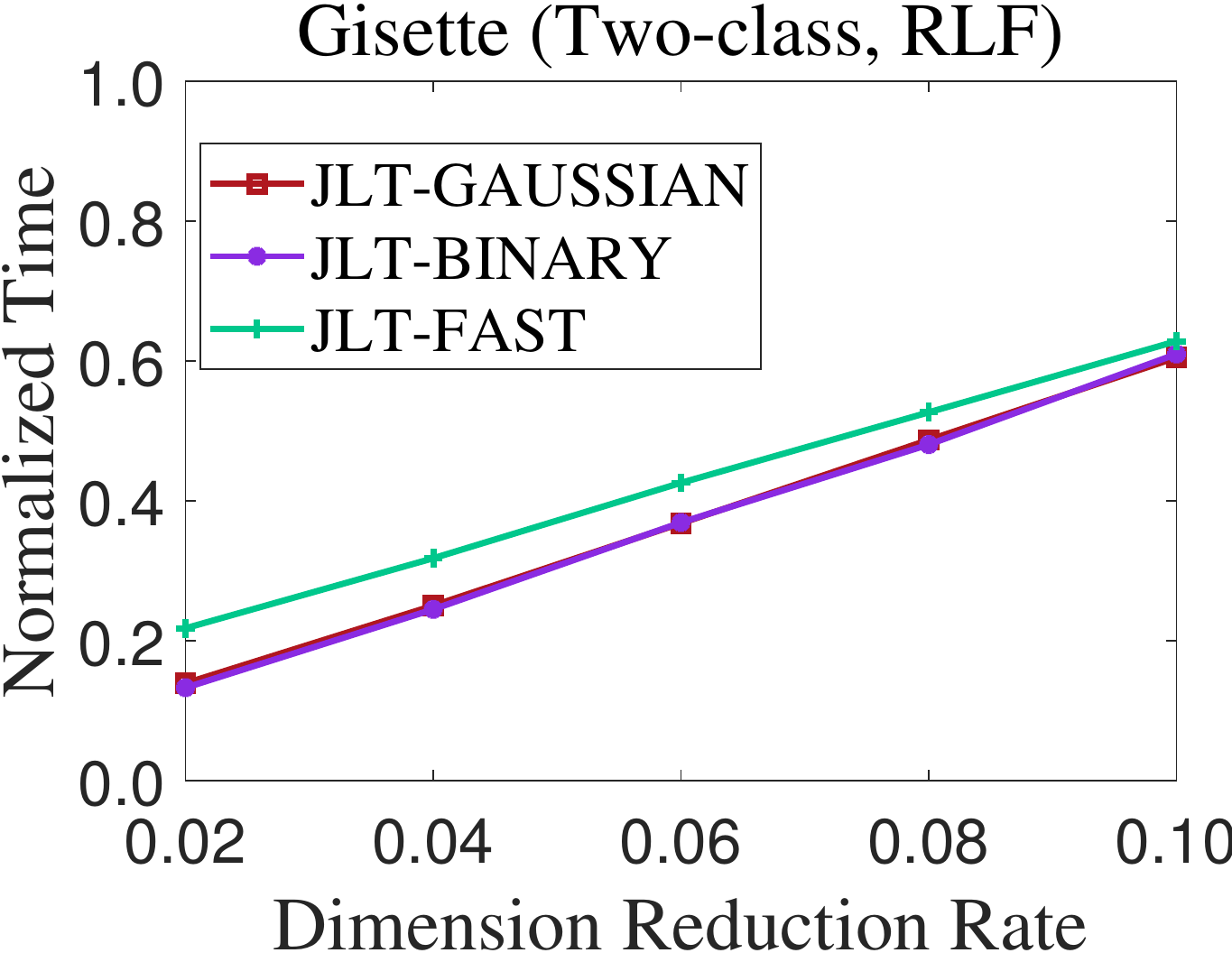}
		}
		\centerline{ \hspace{-0.8in}\hfill \stepcounter{sd4} (\alph{sd4})\hfill \stepcounter{sd4} (\alph{sd4})\hfill \stepcounter{sd4} (\alph{sd4})\hfill \stepcounter{sd4} (\alph{sd4})\hspace{0.8in}}
		\centerline{
		\includegraphics[width=0.5\columnwidth]{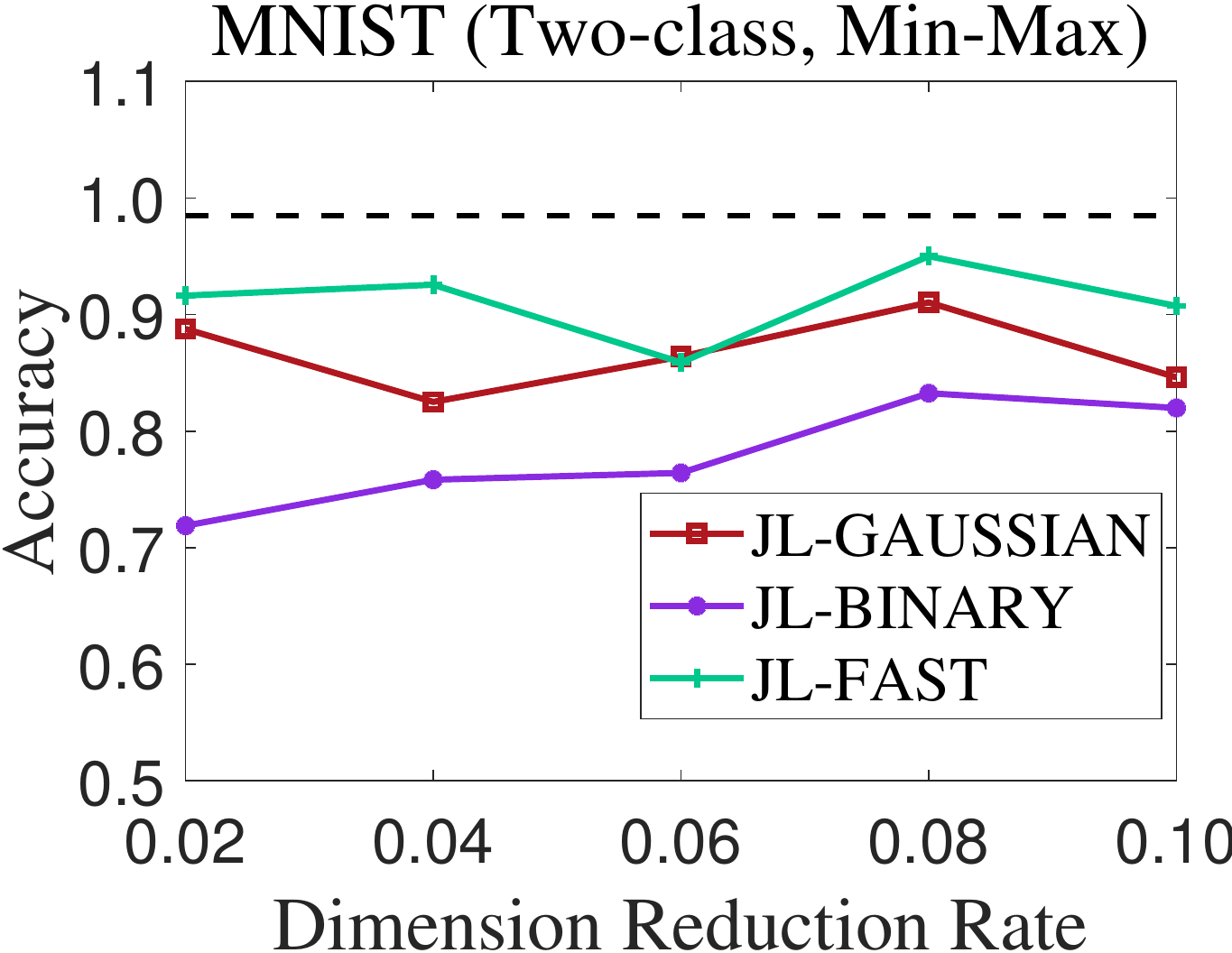}
			\includegraphics[width=0.5\columnwidth]{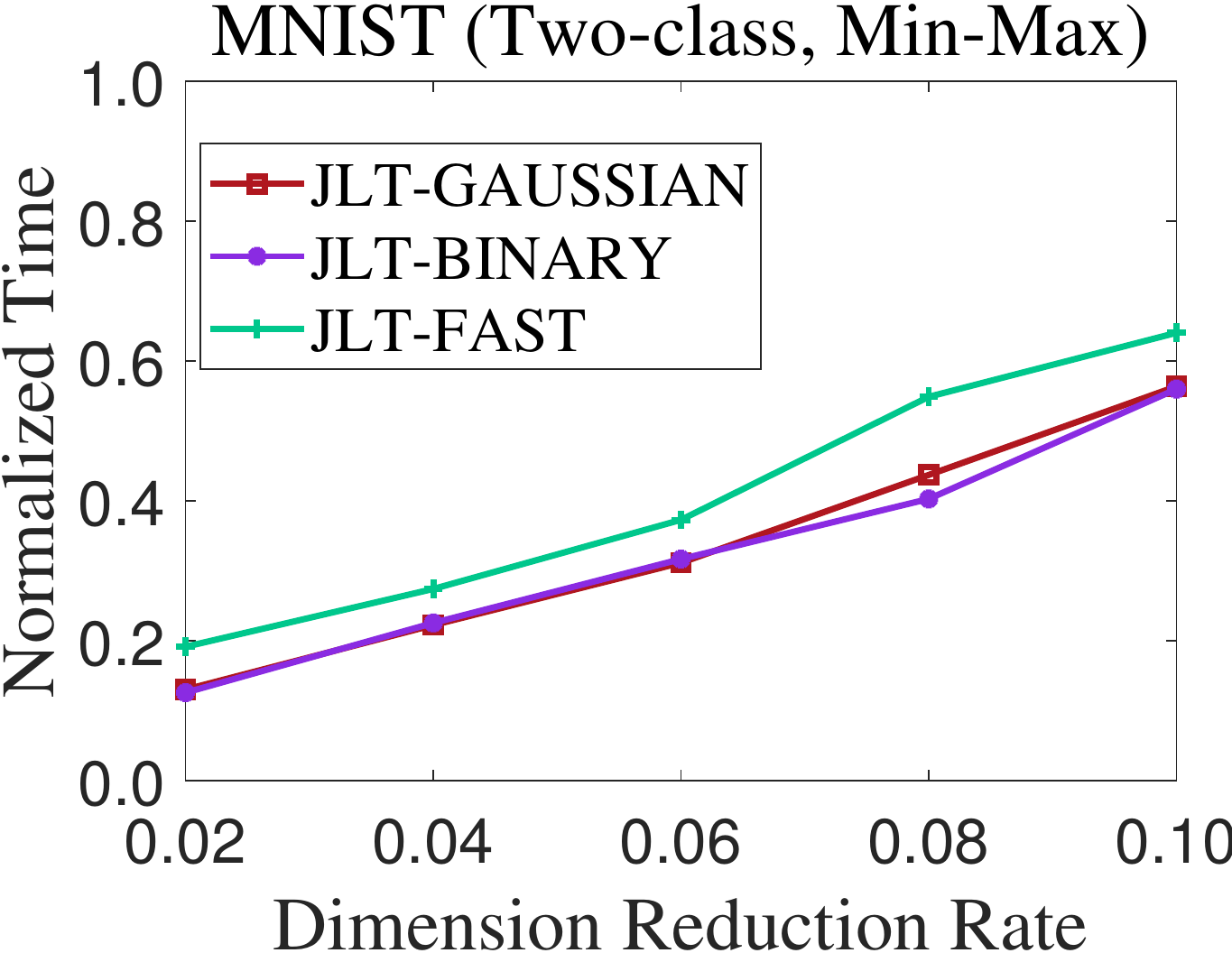}
			\includegraphics[width=0.5\columnwidth]{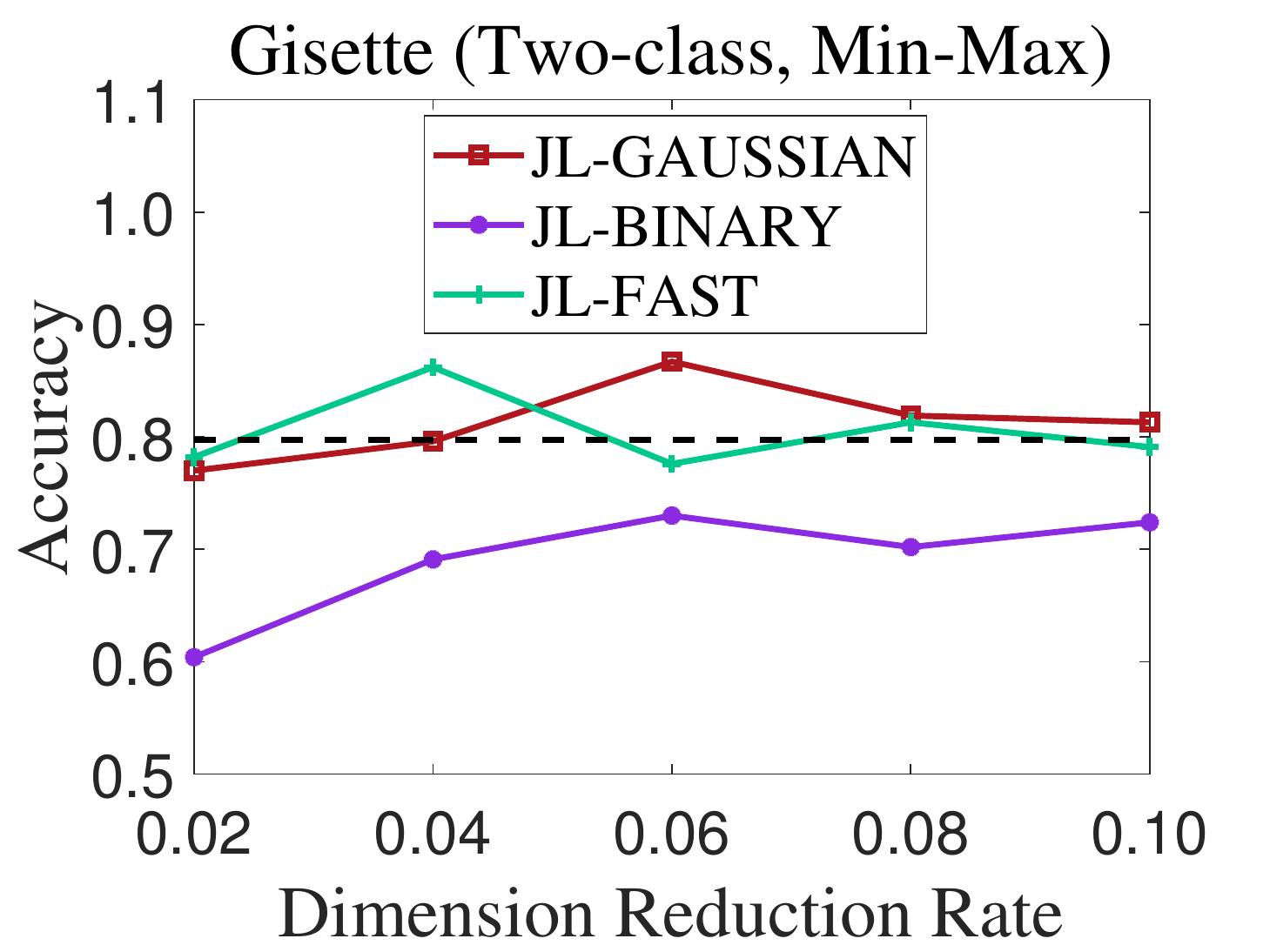}
			\includegraphics[width=0.5\columnwidth]{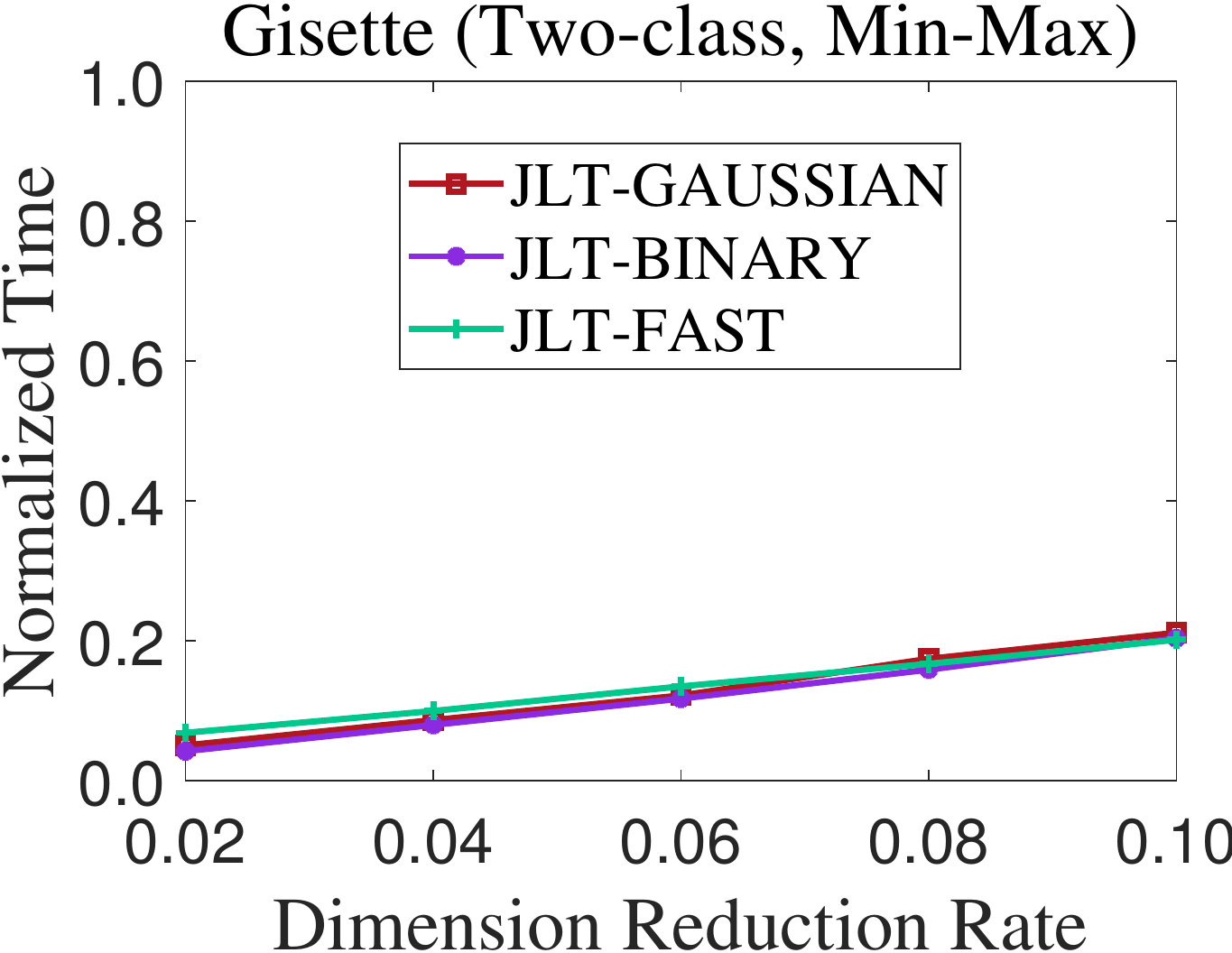}
		}
		\centerline{ \hspace{-0.8in}\hfill \stepcounter{sd4} (\alph{sd4})\hfill \stepcounter{sd4} (\alph{sd4})\hfill \stepcounter{sd4} (\alph{sd4})\hfill \stepcounter{sd4} (\alph{sd4})\hspace{0.8in}}
	\end{center}
	\vspace{-0.3in}
	\caption{The comparison on different JL transform methods for two-class SVM under \textsc{RLF} attack and \textsc{MIN-MAX} attack (averaged across $15$ trials). The horizontal lines  indicates the obtained classification accuracy without dimension reduction. For each instance, the runtimes are normalized over the time of directly running $\mathcal{A}$ without dimension reduction. }
	\label{fig-exp6}
		\vspace{-0.2in}
\end{figure*}


\textbf{Time complexity.} The time complexity of Algorithm~\ref{alg-kcenter} also consists of the three parts as  (\ref{for-time-one-class}). We reduce the dimensionality $d$ to be $\tilde{d}=O(\frac{\log n}{\epsilon^2})$. If the black box algorithm $\mathcal{A}$ of Algorithm~\ref{alg-kcenter} has the time complexity $\Gamma(n, d)$ for an instance of $n$ points in $d$-dimensional space, the total complexity of Algorithm~\ref{alg-kcenter} should be
\begin{eqnarray}
\mathtt{Time}_{JL}+\Gamma(n, \tilde{d})+\mathtt{Time}_{rec},\label{for-time-kcenter}
\end{eqnarray}
where $\mathtt{Time}_{JL}=O(\frac{1}{\epsilon^2}nd\log n)$ and $\mathtt{Time}_{rec}=O\Big(\frac{1}{\epsilon}\cdot (n\cdot \tilde{d}+k\cdot d)\Big)$ (because BC's algorithm runs in $O(\frac{1}{\epsilon})$ steps in the $\tilde{d}$-dimensional space). Similar with  (\ref{for-time-one-class}), the second term $\Gamma(n, \tilde{d})$ often dominates the whole complexity in practice.

\vspace{-0.15in}
\section{Experiments}
\vspace{-0.1in}

All the experimental results were obtained on a Windows workstation with 2.8GHz Intel  Core  i5-840 and 8GB main memory; the algorithms were implemented in Matlab R2018b.  We compare the three representative JL transform methods from \citep{dasgupta2003elementary,achlioptas2003database,ailon2009fast} that are mentioned in Section~\ref{sec-intro}; we name them as \textsc{JLT-Gaussian}, \textsc{JLT-binary}, and \textsc{JLT-fast}, respectively. 

We consider two-class SVM first. We use the popular  implementation of linear SVM   from \citep{journals/tist/ChangL11} as the black box algorithm $\mathcal{A}$. We run the experiments on two widely used high dimensional real  datasets.   \textbf{Gisette} \citep{guyon2005result} contains two classes of  $n=13,500$ vectors in $\mathbb{R}^{5000}$.  \textbf{MNIST}~\citep{lecun98} contains $n=60,000$ handwritten digit images from $0$ to $9$, where each image is represented by a $784$-dimensional vector (so any pair of two digits can form an instance of two-class SVM). 
Each instance is randomly partitioned into two equal-sized parts respectively for training and testing. We apply two methods to generate the outliers. \textbf{(1)} We randomly select $10\%$ pairs from the data and flip their labels (termed \textbf{\textsc{RLF}}). \textbf{(2)} Further, we use the publicly available adversarial attack software~\citep{DBLP:journals/corr/abs-1811-00741} (termed \textbf{\textsc{MIN-MAX}}) to generate $10\%$ adversarial outliers for each instance.  
 
 We vary the dimension reduction rate from $2\%$ to $10\%$ ({\em e.g.,} if the  rate is $2\%$, that means we reduce the dimension from $d$ to $\tilde{d}=2\%*d$).  
Their results are shown in Figure~\ref{fig-exp6}. 
We can see the three methods can achieve roughly the same runtimes that are significantly lower than those without dimension reduction (we count the runtime for all the three parts as (\ref{for-time-one-class})). And meanwhile, for most cases  \textsc{JLT-Gaussian} and \textsc{JLT-fast} are very effective and can achieve close accuracies to the ones without dimension reduction, even when the dimension reduction rate is only $2\%$. 
An interesting observation is that for the higer dimensional dataset \textbf{Gisette}, the accuracies of some instances are improved after dimension reduction. We believe that one possible reason is that dimension reduction can slightly relieve the data redundancy issue which may potentially mislead classification.


We also try to run other dimension reduction methods 
such as PCA, approximate PCA~\citep{boutsidis2014randomized}, and the importance sampling based feature selection~\citep{paul2015feature}. However,  the PCA and approximate PCA methods both take too long runtimes that their normalized running time ratios are even larger than $1$; that means we do not gain any benefit from the dimension reduction. In addition, the feature selection method often yields poor classification results ({\em e.g.,} around $70\%$ accuracy for \textbf{MNIST} under \textsc{RLF} attack), and we believe the reason is that importance sampling may incur high distortion  when outliers exist.

Due to the space limit, we leave more experimental results to our appendix.

\vspace{-0.12in}

\section{Future Work}
\vspace{-0.12in}

In future, we can consider applying random projection to other robust optimization problems with (potentially adversarial) outliers. 
Also, it is interesting to study the theoretical quality guarantees of the (approximate) PCA and feature selection methods when input dataset contains outliers.

 \newpage

\bibliography{random}
\newpage
\appendix

\section{More Experimental Results}
%

\textbf{One-class SVM with outliers.} We use the one-class SVM algorithm and its implementation from \citep{journals/tist/ChangL11} as the algorithm $\mathcal{A}$. As same as the two-class case, we also use the real datasets \textsc{Gisette} and \textsc{MNIST}. 
To generate the outliers, we first compute the hyperplane $\mathcal{H}$ by running the algorithm $\mathcal{A}$ on the training data, and randomly add $10\%$ outliers on the other side of $\mathcal{H}$.  The results of the three JL transform methods are shown in Figure~\ref{fig-exp2}. Similar with the experimental results for two-class SVM, \textsc{JLT-Gaussian} and \textsc{JLT-fast} are very effective and can achieve comparable (or even higher) accuracies with the ones without dimension reduction.

\newcounter{sd2}
\begin{figure*}[ht]
	\begin{center}
		\centerline{\includegraphics[width=0.5\columnwidth]{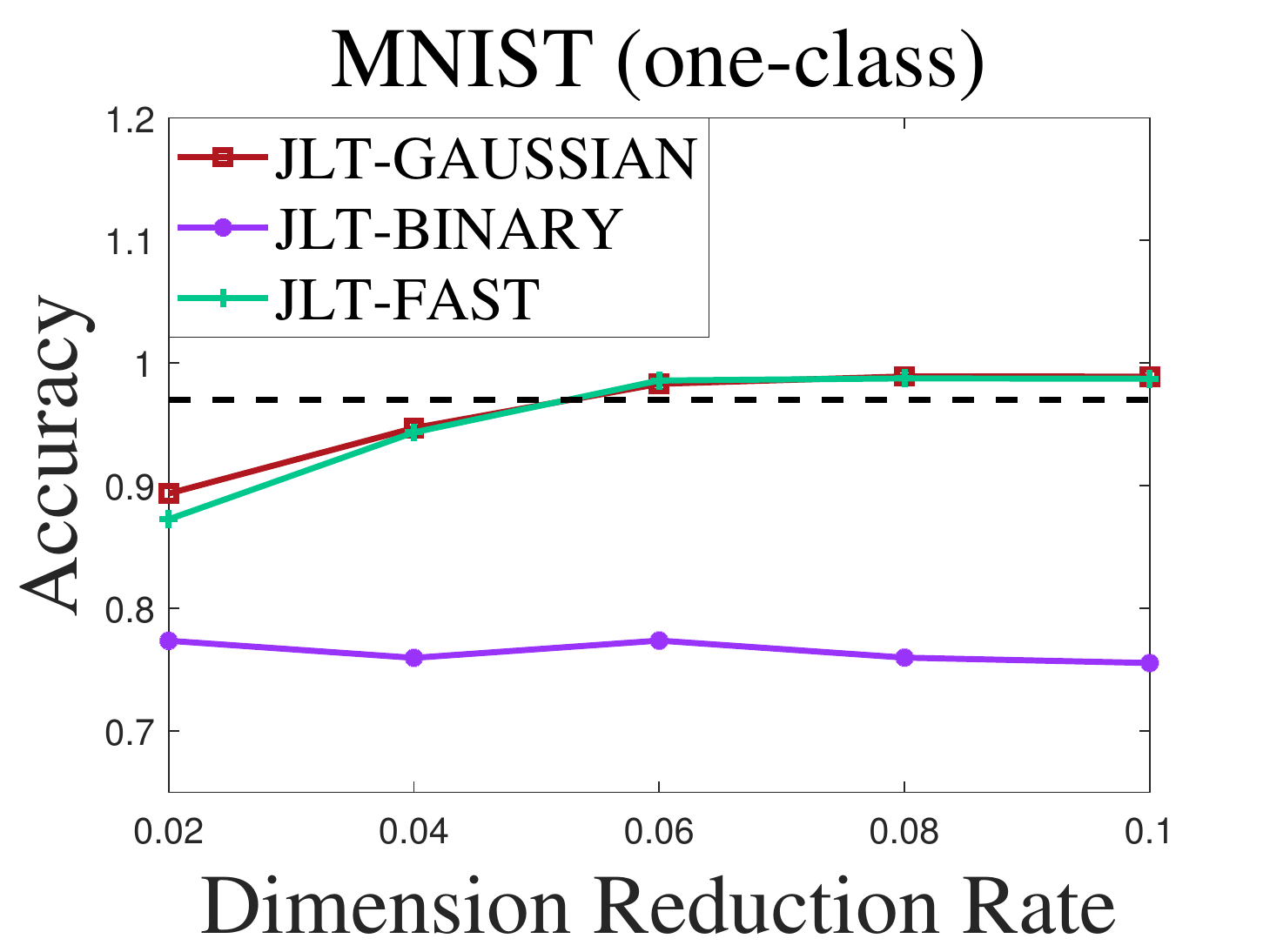}
			\includegraphics[width=0.5\columnwidth]{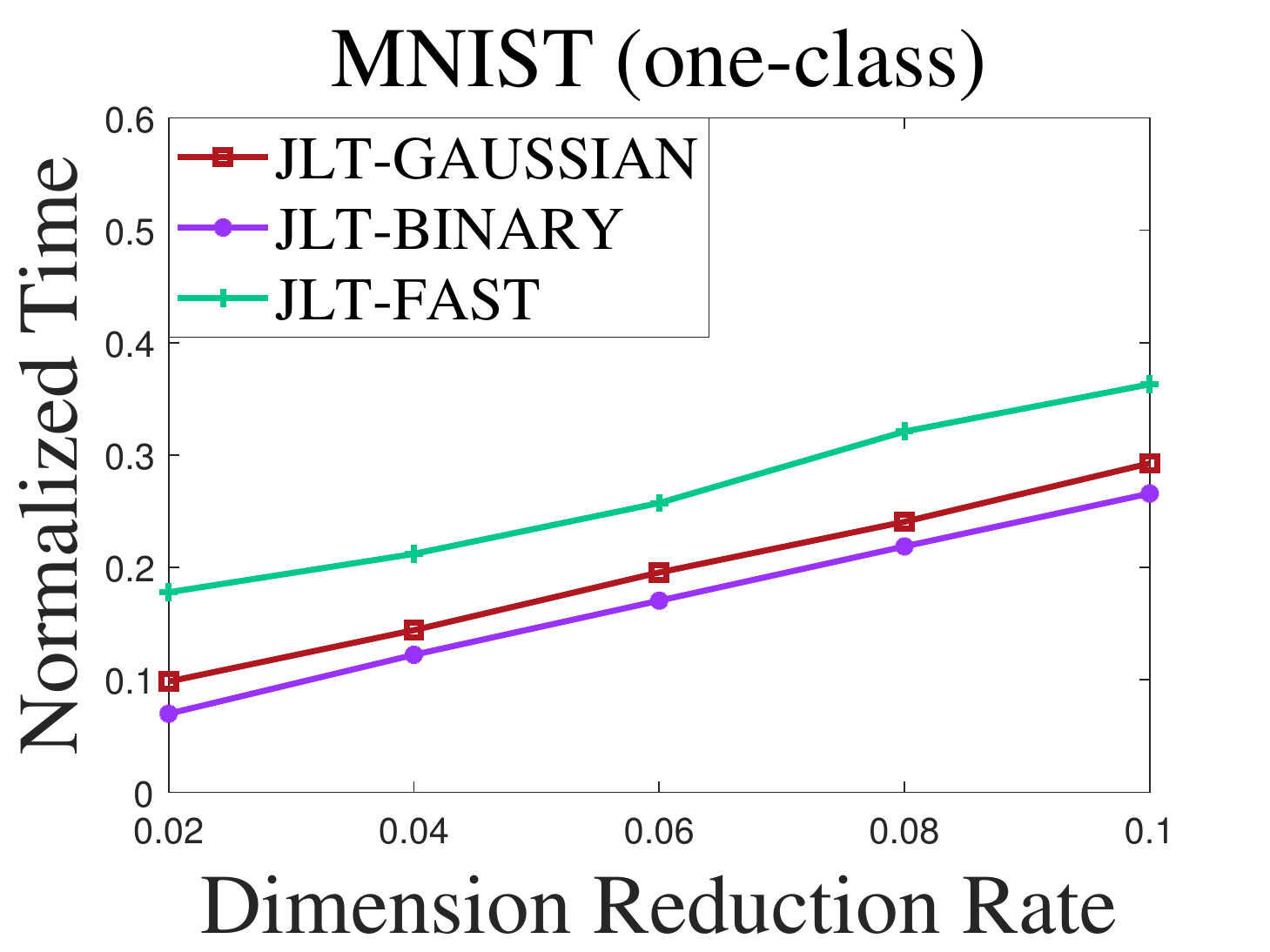}
			\includegraphics[width=0.5\columnwidth]{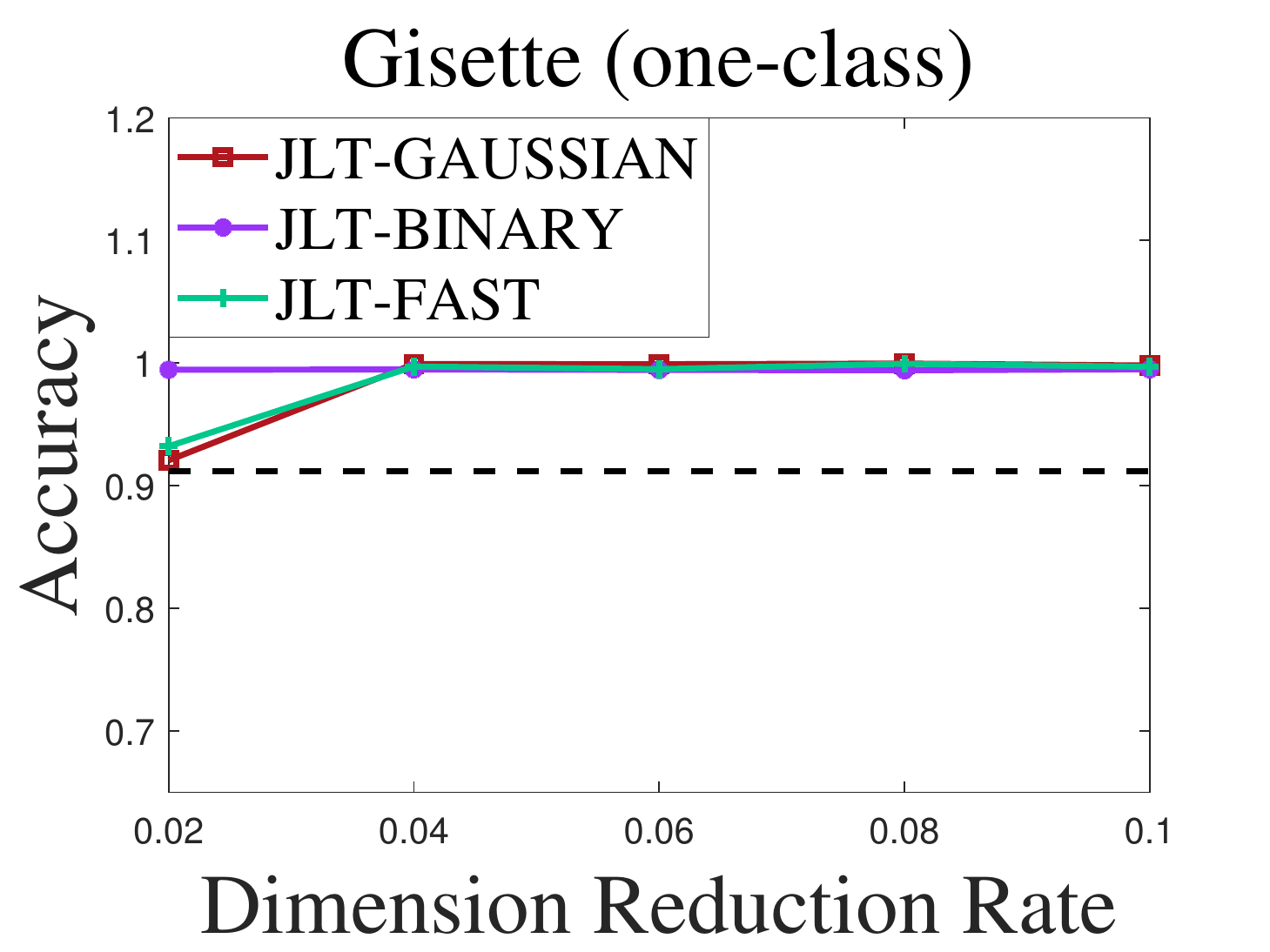}
			\includegraphics[width=0.5\columnwidth]{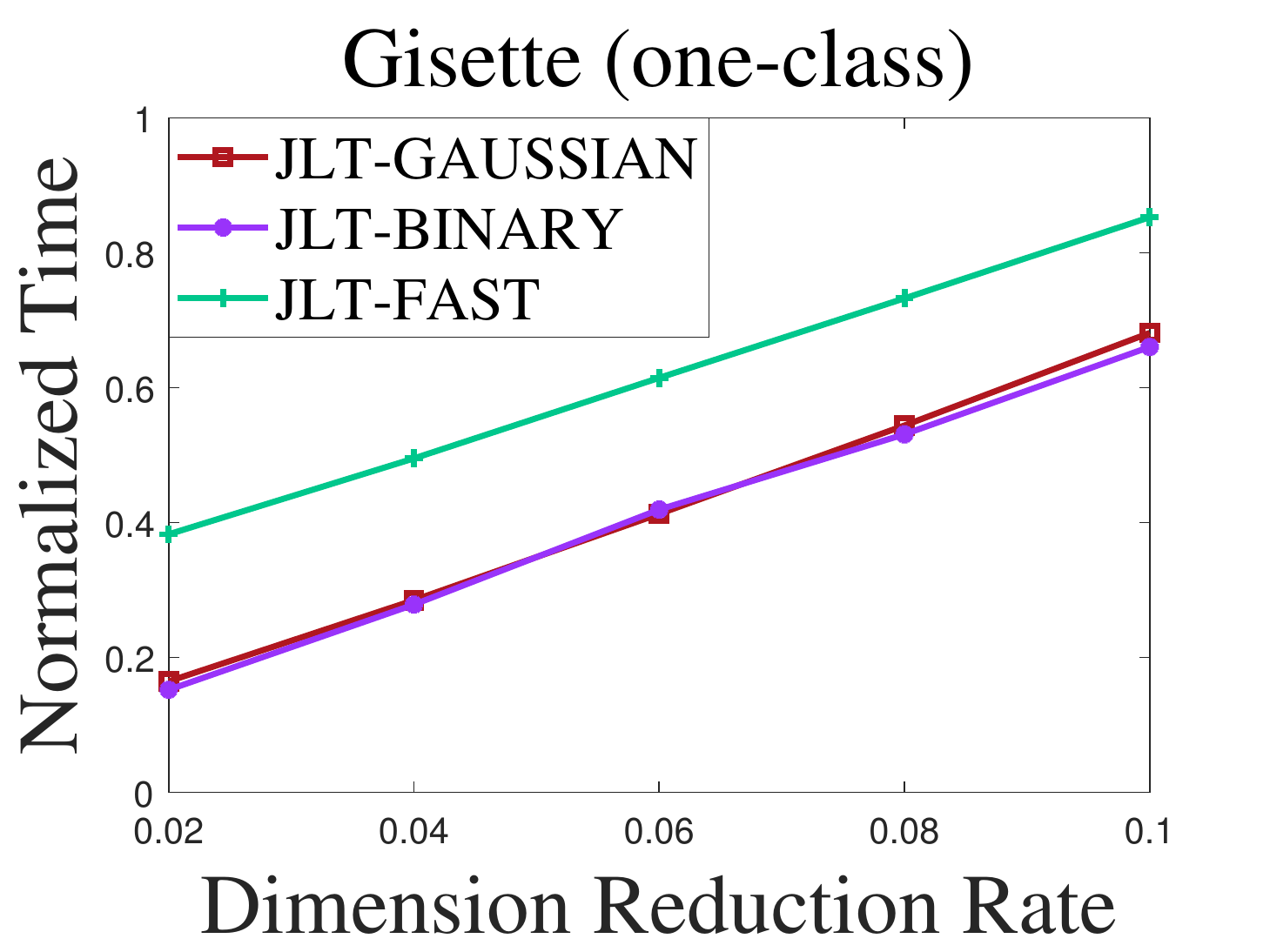}
		}
		\centerline{ \hspace{-0.8in}\hfill \stepcounter{sd2} (\alph{sd2})\hfill \stepcounter{sd2} (\alph{sd2})\hfill \stepcounter{sd2} (\alph{sd2})\hfill \stepcounter{sd2} (\alph{sd2})\hspace{0.8in}}
	\end{center}
	\caption{The comparison on different JL transform methods for one-class SVM with outliers. The horizontal lines in (a) and (c) indicate the obtained classification accuracies without dimension reduction. }
	\label{fig-exp2}
\end{figure*}

\textbf{$k$-center clustering with outliers.} We use the popular $3$-approximation algorithm~\citep{charikar2001algorithms} as the algorithm $\mathcal{A}$. 
We consider two real-world datasets. 
\textbf{CIFAR-10} ~\citep{Cifar10} consists of $n=60,000$ color images of $k=10$ classes ({\em e.g.}, airplane, bird), with each class having $6000$ images and each image being represented by a $3072$-dimensional vector. 
\textbf{MNIST}~\citep{lecun98} contains $n=60,000$ handwritten digit images from $0$ to $9$ ({\em i.e.,} $k=10$), where each image is represented by a $784$-dimensional vector.  
For each dataset, we randomly add $10\%n$ outliers outside the MEBs of the clusters (we use the aforementioned BC's algorithm from~\citep{badoiu2003smaller} to compute an approximate MEB for each cluster). 

We compare the three different JL transforms and show the resulting normalized radius and running time in Figure~\ref{fig-exp9}. The ``normalized radius'' ({\em resp.,} ``normalized running time'') indicates the obtained radius ({\em resp.,} running time) normalized over the one without dimension reduction. The experimental results suggest that   \textsc{JLT-Gaussian} and \textsc{JLT-fast} achieve better clustering qualities than \textsc{JLT-binary}; but \textsc{JLT-binary} is faster than the other two methods, due to its simplicity by using random $\pm 1$ entries. 



\newcounter{sd9}
\begin{figure*}[ht]
	\begin{center}
		\centerline{\includegraphics[width=0.5\columnwidth]{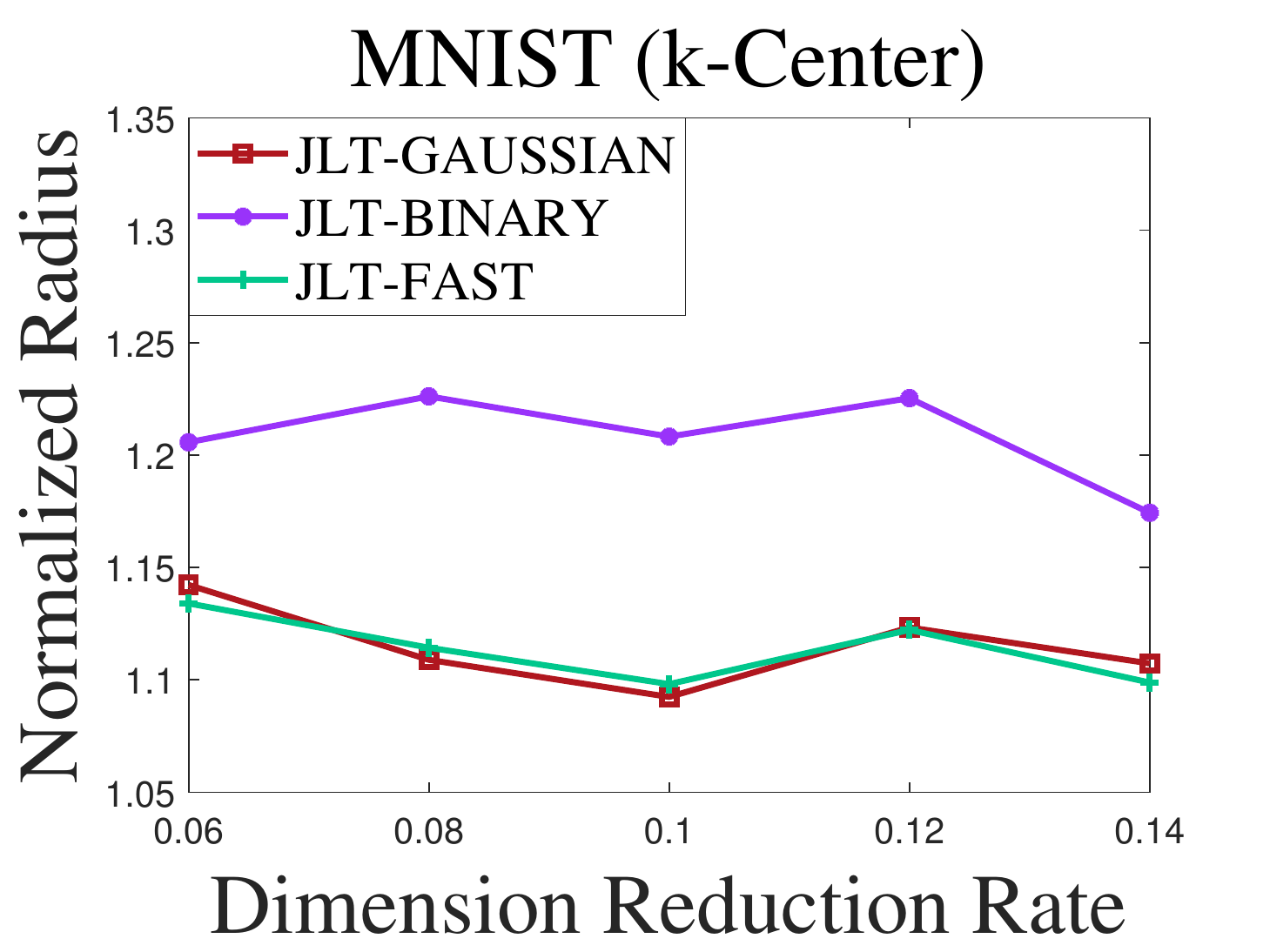}
			\includegraphics[width=0.5\columnwidth]{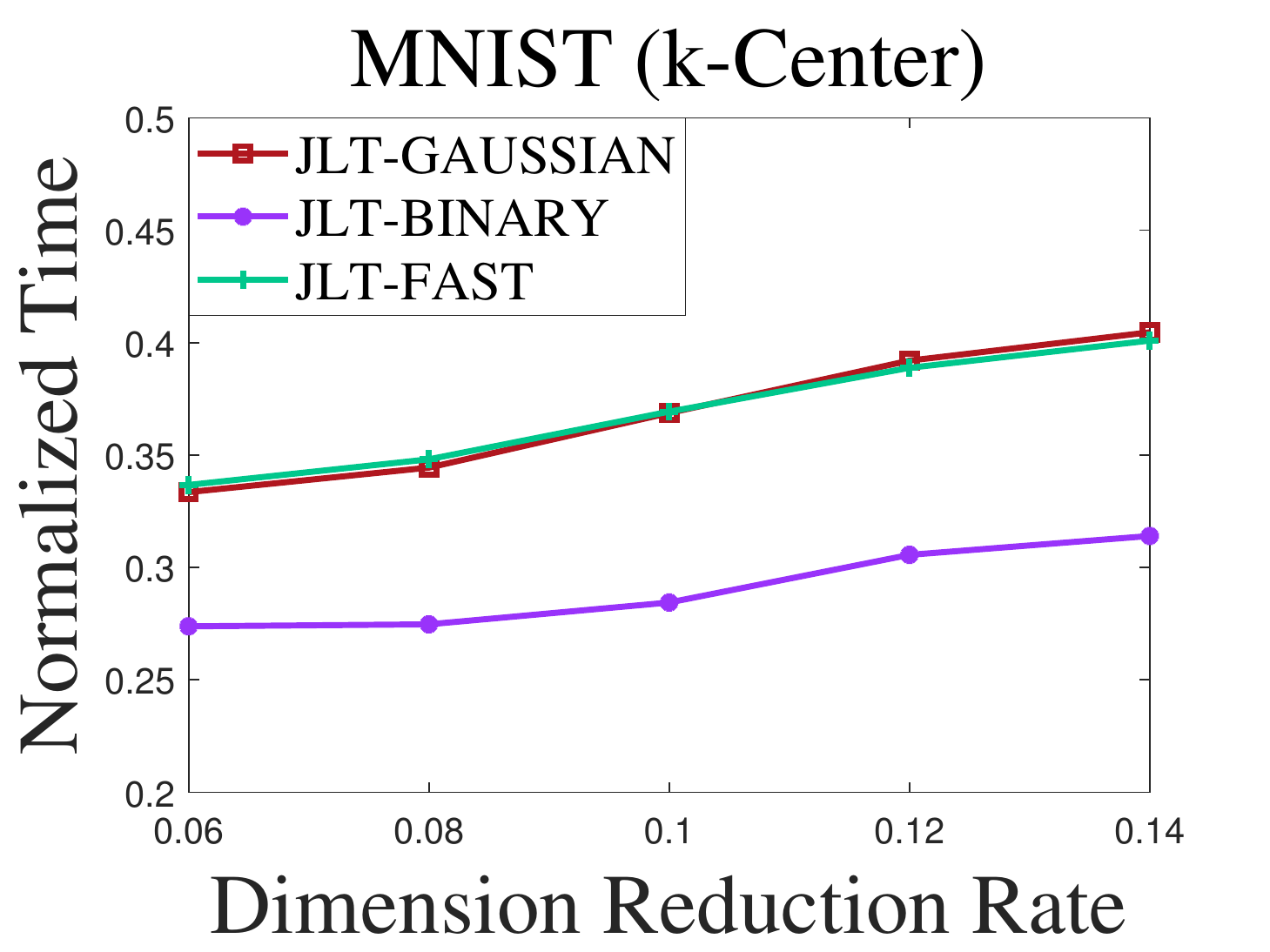}
						\includegraphics[width=0.5\columnwidth]{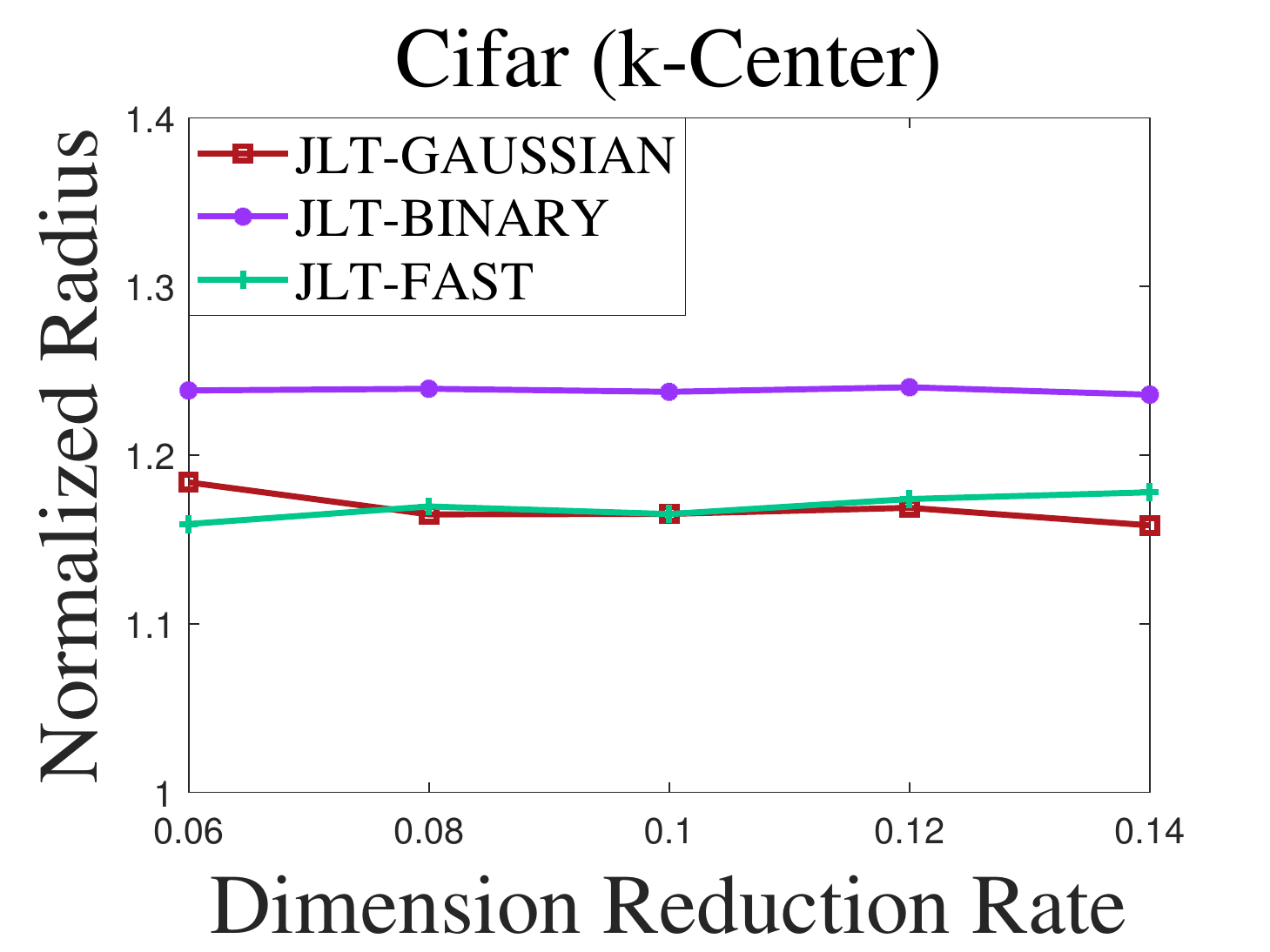}
			\includegraphics[width=0.5\columnwidth]{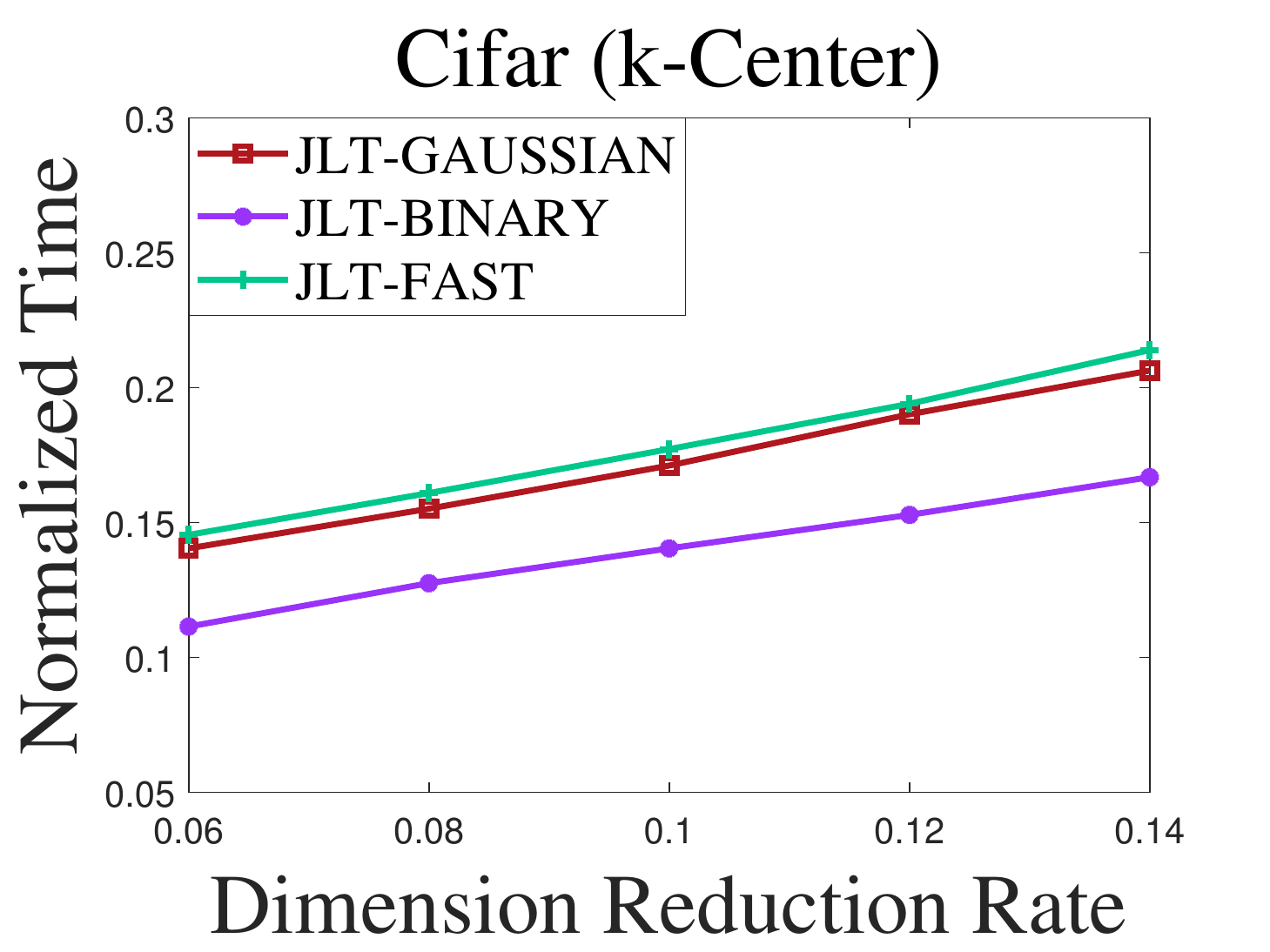}
		}
		\centerline{ \hspace{-0.8in}\hfill \stepcounter{sd9} (\alph{sd9})\hfill \stepcounter{sd9} (\alph{sd9})\hfill \stepcounter{sd9} (\alph{sd9})\hfill \stepcounter{sd9} (\alph{sd9})\hspace{0.8in}}
	\end{center}
	\caption{The comparison on different JL transform methods for $k$-center clustering with outliers.}
	\label{fig-exp9}
\end{figure*}

\end{document}